\documentclass[a4paper,UKenglish,cleveref,thm-restate]{lipics-v2021}

\title{Verification of Correlated Equilibria in Concurrent Games}
\usepackage{xparse}

\author{Senthil Rajasekaran}
{Université Libre de Bruxelles, Brussels, Belgium}
{senthilrajasekaran@ulb.be}
{https://orcid.org/0000-0002-6675-8063}
{}
\author{Jean-François Raskin}
{Université Libre de Bruxelles, Brussels, Belgium}
{jean-francois.raskin@ulb.be}
{https://orcid.org/0000-0002-3673-1097}
{}
\author{Moshe Y. Vardi}
{Rice University, Houston, USA}
{vardi@rice.edu}
{https://orcid.org/0000-0002-0661-5773}
{}

\authorrunning{S. Rajasekaran, J.F. Raskin, and M.Y. Vardi}
\Copyright{S. Rajasekaran, J.F. Raskin, and M.Y. Vardi}
\newtheorem*{manualtheorem@base}{Theorem}
\newtheorem*{manuallemma@base}{Lemma}

\NewDocumentEnvironment{manualtheorem}{m o}{%
  \par\medskip
  \noindent
  \textbf{Theorem~#1.}%
  \IfValueT{#2}{\ (One-Step Deviation)}%
  \;\itshape
}{%
  \par\medskip
}

\NewDocumentEnvironment{manuallemma}{m o}{%
  \par\medskip
  \noindent\textbf{Lemma~#1%
  \IfValueT{#2}{\ (#2)}.}\;\itshape
}{%
  \par\medskip
}

\ccsdesc[100]{Theory of computation~Game theory} 
\keywords{Correlated equilibria, Concurrent games, Verification} 

\usepackage{amsmath, amssymb, amsthm}
\usepackage{tikz}
\usetikzlibrary{automata, positioning, arrows, shapes, matrix, decorations.pathreplacing}
\tikzset{
->,
node distance=3.5cm,
every state/.style={thick, fill=gray!10},
initial text=$ $,
}
\nolinenumbers

\newtheorem*{problem*}{Decision Problem}

\usepackage{etoolbox} 

\hypersetup{
    colorlinks=true,
    linkcolor=blue,
    filecolor=magenta,      
    urlcolor=cyan
}

\newcommand{\jfr}[1]{{{\color{orange}{[JFR:]} \color{blue}#1}}}
\newcommand{\sr}[1]{{{\color{purple}{[SR:]} \color{blue}#1}}}

\title{Verification of Correlated Equilibria in Concurrent Reachability Games}

\begin{document}

\maketitle
\begin{abstract}As part of an effort to apply the rigorous guarantees of formal verification to multi-agent systems, the field of equilibrium analysis, also called rational verification, studies equilibria in multiplayer games to reason about system-level properties such as safety and scalability. While most prior work focuses on deterministic settings, recent probabilistic extensions enable the use of richer equilibrium concepts. In this paper, we study one such equilibrium concept -- correlated equilibria -- and introduce a natural refinement-- subgame-perfect correlated equilibria -- in the context of the verification problem. We characterize the computational complexity of verifying such equilibria and show a somewhat surprising separation (under standard complexity-theoretic assumptions): despite being more general, correlated equilibria yield a strictly harder P-complete verification problem than the subgame-perfect correlated equilibria verification problem, which can be solved in log-squared-space. We further analyze the setting where inputs are given succinctly via Bayesian networks, as the study of succinct representations is an important direction to connect static complexity-theoretic analysis to real-world program representations, and show that this complexity gap disappears under such representations.\end{abstract}

\section{Introduction}\label{introduction}

Computer systems in which multiple autonomous agents concurrently interact with one another are nearly ubiquitous in the design and construction of modern computer systems~\cite{SLmultiagentbook,gullí2025agentic}. When multiple autonomous agents interact concurrently, the total number of possible interactions increases exponentially with the number of agents~\cite{SLmultiagentbook}. While these interactions give a system a large amount of expressive power, they also introduce the possibility of undesirable behaviors, such as violations of the system's overall safety or the creation of highly inefficient bottlenecks~\cite{principlesofmodelcheckingbook,AGHKNPSW21,rajasekaranthesis}.

This has motivated researchers in the field of formal verification to study concurrent multi-agent systems, with the broad goal of providing rigorous theoretical guarantees about the overall behavior of concurrent multi-agent systems in which each autonomous agent acts rationally. To provide a rigorous mathematical framework for studying these problems, researchers have studied the concept of ``equilibria'', which are solution concepts for multi-player games from the field of game theory~\cite{Osborne1994}. This has led to the creation of the field of ``equilibrium analysis'', which is also called rational verification~\cite{AGHKNPSW21}, in which researchers study the complexity of decision problems related to equilibria in multi-agent systems, an idea that has led to many recent research works, e.g.~\cite{GPW17,parkerverification,raskinsubgame2,PRISM,EVE,rajasekaranthesis}.

The field of equilibrium analysis has a strong preference for systems in which both the transitions and the agents are deterministic. This has led to a preference for solution concepts that work well in the deterministic setting, such as the Nash equilibrium, see~\cite{RV21,RV22,RBV23,RV26,GutierrezNPW20,iBG} for a few examples. To reason about more general systems, the literature has begun to consider different equilibrium types and settings. For example, new equilibrium concepts such as subgame-perfect equilibria have been considered in the deterministic setting~\cite{raskinsubgame1,raskinsubgame2}. More recently, the deterministic setting has been expanded to general probabilistic settings, e.g.~\cite{RV25,parkerverification}. The introduction of probabilistic settings not only represents a crucial step towards the general goal of modeling general concurrent systems, but it also allows for the use of equilibrium concepts that only ``truly'' exist in probabilistic settings.

The concept of \emph{correlated equilibrium}, which is one of the main objects of study in this paper, is one such example of an equilibrium concept that is only sensible in the probabilistic setting. To see this, let us first consider the concept of Nash equilibria in probabilistic multi-agent systems. A Nash equilibrium is a tuple of probabilistic \emph{strategies} -- each agent is assigned their own individual strategy that determines their behavior in the system.  A Nash equilibrium can be viewed as a central strategy,  which we call a \emph{controller}, that consists of the cross-product of all of the agents' individual strategies. In this way, the Nash equilibrium induces a product distribution on the set of joint agent actions~\cite{Osborne1994,SLmultiagentbook}. Therefore, in a Nash equilibrium, agent actions are \emph{uncorrelated}.

In contrast, in a correlated equilibrium, the controller samples from a distribution over tuples of joint action choices. Thus, the distribution is not necessarily built from the product of other distributions, allowing for correlation between agent actions. The controller then privately informs each agent of which action they alone have been recommended (and these recommendations are referred to as \emph{controller advice}). So, for example, if the controller samples the action tuple ``abc'', it would only inform the second agent to play `b', and not inform the second agent that the first agent is informed to play `a' or that the third agent is informed to play `c'. By allowing a central planner to sample from a distribution over the space of joint actions, the controller can correlate agent action choices. 

Such correlation of actions is an especially useful concept in settings where agents behave independently but share a common system-level objective. Consequently, correlated equilibria have seen applications in networking (see~\cite{correlatednetworkexample} for a survey particularly focusing on \emph{cognitive radio networks}), smart-grid management (see~\cite{correlatedgridexample} for a recent example), and other domains. 
A correlated equilibrium corresponds to a scenario in which a central controller recommends actions to agents individually. This makes it well-suited to the field of formal methods, where the setting often involves a system with multiple independent agents that are not necessarily antagonistic to a centralized system-level goal~\cite{principlesofmodelcheckingbook,AGHKNPSW21}, and are therefore amenable to receiving instructions from a centralized controller. Furthermore, there are also mathematical benefits to considering correlation.   It is well known that the set of correlated equilibria generally contains the set of Nash equilibria as a subset~\cite{Osborne1994}. Consequently, there may be system-level desired outcomes that can be sustained as correlated equilibria but cannot be realized by a Nash equilibrium. For interested readers, we review such an example in Section~\ref{examplesection}.

Despite being well-suited to formal methods and, therefore, to equilibrium analysis, the equilibrium-analysis literature's preference for deterministic settings means that correlated equilibria have not been studied in depth. As previously mentioned, this concept is viable only in probabilistic settings. This is because the interesting dynamics of correlated equilibria arise when agents can reason about the joint action via Bayesian inference. For example, by recommending the second agent `b', as in the example above, the agent may infer that certain joint actions, such as ``aaa'', could not have been sampled by the controller and update their beliefs accordingly. This is why correlated equilibria require a probabilistic setting, as a controller that behaves deterministically and therefore gives deterministic recommendations gives agents complete information about the joint action chosen, removing the need for Bayesian inference.

The two central decision problems in equilibrium analysis are \emph{verification} and \emph{realizability}: verification asks whether a given specification of player behavior is an equilibrium in a given system, while realizability asks whether an equilibrium exists at all. In the probabilistic setting, both coincide with deep questions of game theory, and the realizability problem in particular has resisted progress for decades~\cite{challenge}. The underlying difficulty is that undiscounted, infinite-horizon games in which every player has a reachability goal do not admit a continuous payoff function (see Section~\ref{discont} in the Appendix), so the fixed-point arguments of~\cite{Nash51} cannot be applied. Because the mathematical structure we deal with is much closer to the game-theory literature -- where probabilistic models are more standard than in equilibrium analysis -- we adopt game-theoretic terminology throughout, referring to ``systems'' as ``games'' and ``agents'' as ``players''.

The complexity of the realizability problem for correlated equilibria -- which, given a game $G$, asks whether there is a controller advice $D$ that constitutes a correlated equilibrium in $G$ -- is itself open under the definitions we adopt. Every Nash equilibrium is, in particular, a correlated equilibrium~\cite{Osborne1994}, and whether Nash equilibria always exist in games in which each player has a reachability goal is a long-standing open problem, a claimed resolution in~\cite{chatterjee-majumdar-jurdzinski-csl2004} notwithstanding. The controller advice in our formulation is \emph{memoryless} -- the recommended distribution at a state depends on that state alone, not on the history of play -- which is, we believe, the most natural way to apply the definition of a correlated equilibrium to the probabilistic concurrent setting. This restriction is not innocuous: there is a three-player turn-based game that admits no memoryless Nash equilibrium (Proposition~3.13 of~\cite{ummelsthesis}; see also~\cite{nostationaryNEkuipers}). Consequently, the realizability of correlated equilibria is non-trivial in our setting, and many deep questions remain open.

In this paper, we study the complexity of the verification problem for correlated equilibria and for a natural strengthening, \emph{subgame-perfect} correlated equilibria. We pay close attention to the low-level details of numerical representation, particularly for probabilities, in order to obtain sharp complexity-theoretic statements. Our main results exhibit a somewhat counterintuitive separation that mirrors the one reported in~\cite{RV25}: under standard complexity-theoretic assumptions, verification turns out to be \emph{easier} for the subgame-perfect variant, as we show that it can be done in LOG\textsuperscript{2}SPACE (Theorem~\ref{subgameverificationcomplexitytheorem}) than for the plain correlated equilibrium, for which we establish P-Completeness (Theorem~\ref{corrverificationcompletenesstheorem}). Because representation plays such a central role in these results, we also consider succinct presentations of multi-player games via Bayesian networks and analyze how this change of input format affects the complexity of verification for both equilibrium concepts. In the succinct representation case, we show that the complexity of the two verification problems coincides as they are shown to be PP-hard and solvable with polynomially many calls to a PP oracle (Theorem~\ref{BNrepcomplexitytheorem}).

\section{Preliminaries}\label{definitionsection}

Our main objects of study are the \emph{probabilistic concurrent game} and the \emph{controller advice}.

\begin{definition}\label{pCGS}

A {\em probabilistic concurrent game graph} (henceforth, just \emph{game}) is  a 7-tuple $G=\langle  Q, q_{\sf init},\Pi,A,{\cal A},\delta,(R_i)_{i\in \Pi} \rangle$ specified by the following components:
    {\bf (1)} A finite non-empty set of states $Q$, with $q_{\sf init} \in Q$ the initial state.
    {\bf (2)} A finite non-empty set of players $\Pi$.
    {\bf (3)} A finite non-empty set of action names $A$. Given the action set $A$ and the set of players $\Pi$, we define the set of \emph{joint actions} $A^{\Pi}$. Players in a game choose actions concurrently, and a joint action is the result of $\Pi$ players all selecting one of the $A$ actions concurrently.  For the sake of convenience, we also introduce the notation $A_i \subseteq A$ as the set of actions associated with Player~$i$ specifically, which may be restricted by $\mathcal{A}$, as explained in the next point.
    {\bf (4)} $\mathcal{A}\ : Q \times \Pi \rightarrow 2^A \setminus \emptyset$, a function that maps each state $q$ and player $i \in \Pi$ to the non-empty subset of actions ${\cal A}(q,i) \subseteq A$ available to Player~$i$ in state $q$.
    We assume that joint actions always respect ${\cal A}$. 
    {\bf (5)} $\delta:Q \times A^{\Pi}  \rightarrow {\sf Dist}(Q)$ is the probabilistic transition relation, a partial function that maps states and joint actions to a distribution over states representing potential successor states.
    {\bf (6)} The set $R_i \subseteq Q $ is the \emph{reachability goal} of Player~$i$. 
\end{definition}
The game starts in state $q_{\sf init} \in Q$. At each state, every player chooses an action from the set of action names $A$. Given this joint choice, an element of $A^{\Pi}$, the transition function $\delta$ determines a distribution over the possible next states. This distribution is sampled, and the game moves to the resulting state, where the process repeats. This results in an infinite sequence of states (which we call a \emph{trace}) $t \in Q^{\omega}$. Given a trace $t$, Player~$i$ receives a payoff of $1$ if the infinite trace contains a state $q \in R_i$ and $0$ otherwise.

The transition function $\delta$ is probabilistic and provided as part of the input. We therefore say that the probabilities in $\delta$ are at most $\ell$ bits in length, where $\ell$ is the size of the largest number represented in $\delta$. We go into more detail about the representation of $\delta$ after introducing the \emph{controller advice} $D$. As described in Section~\ref{introduction}, the controller advice is the method by which a central planner recommends players actions to play. 

\begin{definition}\label{controlleradvicedefinition}
    A \emph{controller advice} $D : Q \rightarrow {\sf Dist}(A^{\Pi})$ maps states to a joint distribution over joint actions. We assume that all controller advices considered are \emph{``valid''}, i.e. they never suggest a tuple with positive probability that instructs a Player~$i$ to play an action restricted by $\mathcal{A}$.
\end{definition}

A controller advice $D$ works by sampling from its joint distribution and showing each player their sampled action. Players may view their own suggested action and the total joint distribution from which they are sampled, but not the actions suggested to the other players. This means that players may be able to update their knowledge of the sampling distribution based on the action they are given. For example, suppose we are considering a three-player game $G$ in which each player has two actions: $a$ and $b$. At a certain state $q$, we have $D(q) = \{ aab : .5 , bbb : .5 \}$. If player $1$ or $2$ is recommended to play $b$ at this state, he can conclude with certainty that the joint action sampled was $bbb$, as this is the only possible tuple that is consistent with the recommended action $b$. Player 3, however, is always recommended action $b$, and so she gets no extra information upon viewing a recommendation of $b$. Note that player 3 can similarly deduce the joint distribution of players 1 and 2, which is a 50/50 distribution over $aa$ and $bb$.

Just as with $\delta$, we assume that the probabilities in the controller are represented by at most $\ell$ bits. This general parameter $\ell$ can therefore be thought of as the maximum of the largest representation of a probability in $\delta$ and the largest representation of a probability in $D$. Critically, the bound $\ell$ is derived from the fact that we are given $\delta$ and $D$ as input; it is not an a priori restriction. 

We represent the controller advice $D$ as a set of $|Q|$ binary tables of the following form, with tuples in $A^{\Pi}$ listed on the left column and (at most) $\ell$-length numerals listed on the right column.
\begin{figure}[ht]
\centering
\begin{tikzpicture}[
  every node/.style={font=\normalsize},
  scale=0.8,
  transform shape,
  brace style/.style={decorate,decoration={brace,amplitude=6pt,raise=2pt}}
]
\newlength{\RightColContentWidth}
\newlength{\RightColWidth}
\settowidth{\RightColContentWidth}{$0.58214\cdots$}
\setlength{\RightColWidth}{\dimexpr\RightColContentWidth + 2\dimexpr\tabcolsep\relax + 2\dimexpr\arrayrulewidth\relax\relax}

\node[inner sep=0pt,outer sep=0pt,rectangle] (tbl) {
  \begin{tabular}{|c|c|}
  \hline
  $a_1 a_2 a_3$ & $0.25192\cdots$ \\ \hline
  $a_1 a_2 a_4$ & $0.11723\cdots$ \\ \hline
  \vdots        & \vdots          \\ \hline
  $a_{9} a_{4} a_{2}$ & $0.16723\cdots$ \\ \hline
  \vdots        & \vdots          \\ \hline
  \end{tabular}
};

\coordinate (RC-NE) at ([yshift=4pt]tbl.north east);
\coordinate (RC-NW) at ([xshift=-\RightColWidth]RC-NE);
\draw[brace style, -] (RC-NW) -- (RC-NE) node[midway,above=6pt] {$ \leq \ell$ bits};

\draw[decorate,decoration={brace,amplitude=6pt,raise=2pt,mirror}, -]
    ([xshift=-8pt]tbl.north west) --
    ([xshift=-8pt]tbl.south west)
    node[midway,left=8pt] {$\le T$ rows};

\end{tikzpicture}

\caption{A depiction of the representation of $D$ for a single state $q \in Q$. This table specifies the probability distribution $D(q)$ over $A^{\Pi}$ through a table that matches tuples in $A^{\Pi}$ to numerals that have representations that are at most $\ell$ bits long. Action tuples that are played with zero probability are simply omitted from the table, meaning that it is possible for the table to have fewer than $|A^{\Pi}|$ rows. Specifically, we say that the largest such table over all $q \in Q$ has $T$ rows, meaning that each table has $\leq T$ rows definitionally.}
\label{controllerrepresentation}
\end{figure}
The transition function $\delta: Q \times A^{\Pi} \rightarrow {\sf Dist}(Q)$ is represented in a similar way, but instead of matching tuples to $\ell$-length probabilities, it matches triples $\langle q, a, q' \rangle$ with $q,q' \in Q$ and $a \in A^{\Pi}$ to $\ell$-length probabilities. Thus, there are two additional columns that specify the starting and ending states. As before, tuples that are associated with $0$ probability are omitted from the representation of the transition table, meaning that it is once again possible for $\delta$ to have fewer than $|A^{\Pi}| \cdot |Q|^2$ rows. An alternative, more succinct representation of $\delta$ and $D$ based on \emph{Bayesian Networks} is considered in the latter half of the paper.

As mentioned earlier, the output of $D$ should be intuitively viewed as a ``recommendation''. If a player does not wish to follow the controller advice, they may instead choose their actions in a different manner by employing a \emph{strategy}.

\begin{definition}\label{strategydef}
    A \emph{strategy} for Player~$i$ is a function $\pi_i$ that takes as input a state $q$ and an action $ a \in \mathcal{A}(q,i)$ and returns a distribution over the actions allowed by $\mathcal{A}$.
    Formally, $\pi_i : (Q \times A_i)^* \rightarrow {\sf Dist} (A_i)$. Just as $D$ respects $\mathcal{A}$, so must $\pi_i$ by never recommending forbidden actions.

\end{definition}
Note that there is no a priori bound on the length of the numerals used in a strategy. The set of all Player~$i$ strategies is denoted $\Pi_i$. Note that player strategies take a sequence of actions as input. This represents a player potentially taking into account the history of the controller's advice before choosing an output. As mentioned before, observing the action recommended by the controller advice may provide the player with additional information about the joint action sampled. Therefore, players are recommended to follow $D$, but have the option to deviate from $D$ through their own strategies. When Player~$i$ decides to follow a strategy $\pi_i$ instead of $D$, we denote this by the pair $\langle D, \pi_i \rangle$. Note that once the behavior of each player is fixed, either through the specification of $D$ or by a pair of $D$ and a strategy, it becomes possible to calculate the expected payoff that each player receives in $G$ through a Markov Chain construction. This construction is given in technical detail in Section~\ref{chainsection}. We denote this expected payoff by $\rho_i(\cdot)$, and so the expected payoff that Player~$i$ receives when all players follow $D$ is given by $\rho_i(D)$. When Player~$i$ deviates from $D$ to play $\pi_i$, the expected payoff he receives is given by $\rho_i( D, \pi_i)$. This brings us to the definition of the \emph{correlated equilibrium}.

\begin{definition}[Correlated Equilibrium]
    Given a game $G$, a controller advice $D$ is a \emph{correlated equilibrium} if it is the case that for each player $i \in \Pi$, we have $\forall \pi_i \in \Pi_i. \rho_i(D) \geq \rho_i(D,\pi_i)$. 
\end{definition}

Intuitively, this means that no player can unilaterally deviate from $D$ and increase their expected payoff. Note, however, that the expected payoffs calculated are with respect to the initial state $q_{\sf init}$ in $G$. We now broaden the definition of correlated equilibria to create the \emph{subgame-perfect correlated equilibrium}, akin to the development of the \emph{subgame-perfect equilibrium} after the \emph{Nash equilibrium} in game theory~\cite{Osborne1994}. A subgame-perfect equilibrium considers \emph{subgames} that are induced by \emph{histories}.

\begin{definition}[History]\label{historydef}
    Given a game $G$, a \emph{history} $h \in (Q \times A^{\Pi})^* \times Q$ is a finite sequence of state-joint action pairs $q_{\sf init} a_0 q_1 a_1 \ldots$ such that : {\bf (1)} The first element of a history is the initial state $q_{\sf init}$. {\bf (2)} For every consecutive state-action-state triple $q_k a_k q_{k+1}$, we have $\delta(q_k,a_k)[q_{k+1}] > 0$.{\bf (3)} For every state-action pair $q_k a_k$, $a_k$ contains no actions restricted by $\mathcal{A}(q_k)$. Since $a_k \in A^{\Pi}$, $a_k = \langle a_1 \ldots a_{|\Omega|} \rangle$ and for all $ i \in \Omega$ we have $a_i \in \mathcal{A}(q_k, i)$.
\end{definition}

Intuitively, a history represents a ``plausible'' partial execution of the game $G$. All executions start at state $q_{\sf init}$. From there, all joint actions $a_0 \in A^{\Pi}$ not restricted by $\mathcal{A}$ could potentially be chosen, and the next state $q_1$ must be a state that it is possible to transition to with positive probability upon playing $a_0$ at $q_{\sf init}$. This process then repeats a finite number of times. The game's position after observing a history is called a \emph{subgame}, and, for notational convenience, we specify that a history ends with a state, not a joint action.

\begin{definition}[Subgame]\label{subgamedef}
    Given a game $G$ and a history $h$, the subgame $G|_h$ represents the game $G$ after the partial execution $h$ has occurred. For a history with last element $q \in Q$, this means that $G$ starts in state $q$ and the payoff of players that have visited their reachability goal in $h$ is fixed to $1$.
\end{definition}
These definitions of history and subgame are standard in the game-theoretic literature~\cite{Osborne1994}, but take on a more specific interpretation in our setting. Since each player has a reachability goal, a subgame induced by a history $h$ with last element $q$ is equivalent to the game $G$ started in state $q$ that remembers information about which players have already visited their reachability goal in the history. This additional information about players who have already reached their goals is further developed in Section~\ref{subgamedeviationsection}.

\begin{definition}[Subgame-Perfect Correlated Equilibrium]
    Given a game $G$, a controller advice $D$ is a \emph{subgame-perfect equilibrium} if it is the case that for each player $i \in \Pi$ and all histories $h$, we have $\forall \pi_i \in \Pi_i. \rho_i(D) \geq \rho_i(D,\pi_i)$ in $G|_h$.
\end{definition}

Intuitively, correlated equilibria consider the expected payoff that players receive when the game $G$ is started in the original starting state $q_{\sf init}$, whereas the subgame-perfect correlated equilibria consider the expected payoff that players receive in all subgames induced by changing the starting state. This definition is motivated by the well-known strengthening of the Nash equilibrium to the subgame perfect equilibrium in the game-theory literature~\cite{Osborne1994}. Note that it is possible to change the starting state to a state that is not reachable with positive probability when $D$ is followed from $q_{\sf init}$; nevertheless, this is still a valid subgame in the traditional game-theoretic sense~\cite{Fudenberg13}. Although correlated equilibria and subgame-perfect equilibria are standard topics in game theory, they are not as mainstream as Nash equilibria and subgame-perfect equilibria. For this reason, we provide some illustrative examples to assist readers looking for more intuition in Section~\ref{examplesection} of the Appendix. We now introduce the main decision problems considered in this paper.

\begin{definition}
    The \emph{correlated equilibrium verification problem} takes as input a game $G$ and a controller advice $D$ and decides whether $D$ is a correlated equilibrium in $G$. The \emph{subgame-perfect correlated equilibrium verification problem} takes as input a game $G$ and a controller advice $D$ and decides whether $D$ is a subgame-perfect correlated equilibrium in $G$.
\end{definition}

An overview of how this paper proceeds can be found in Section~\ref{outlinesection} in the Appendix.

 
\section{Constructions}\label{constructionssections}
 
The two constructions below are given with respect to a fixed Player~$i$ and are \emph{never built explicitly}. Their entries are queried on-the-fly as needed by algorithms defined in later sections.
 
\subsection{The Markov Chain \texorpdfstring{$G \times D$}{G x D}}
\label{chainsection}
We first define a Markov chain that is used to compute the expected reward for Player~$i$ when all players follow $D$. When all players play according to $D$, the interaction of $G$ and $D$ yields the Markov Chain
$G \times D = \left\langle\, Q \cup (Q \times A_i),\; q_{\sf init},\;  E,\; R_i \right\rangle$, where $E$ is the transition probability matrix of $G \times D$.  The state space augments $Q$ with pairs $Q \times A_i$ to record which action $D$ recommended to Player~$i$. This extra structure is not strictly necessary here, but it is reused
in the MDP construction below.  Transitions are defined as follows.

\begin{itemize}
  \item \textbf{From $q \in Q$:}
    State $q$ transitions to $\langle q, a\rangle \in Q \times A_i$ with
    probability
    \[
      P_{q,a} \;:=\;
        \sum_{d \in D(q), d[i] = a } \mathbb{P}(D(q) \sim d),
    \]
    the marginal probability that Player~$i$ is recommended action~$a$
    at state~$q$, where the notation $D(q) \sim d$ is the probability that the distribution $D(q)$ samples $d$.
 
  \item \textbf{From $\langle q,a\rangle \in Q \times A_i$:}
    This state transitions to $q' \in Q$ with probability
    \[
      \frac{\displaystyle\sum_{d \in D(q), d[i] = a}
            \mathbb{P}(D(q)\sim d)
            \cdot\delta(q,d,q')}{P_{q,a}}.
    \]
\end{itemize}
 
\noindent
Every entry of $E$ is a ratio whose numerator arises from a sum of at most $T$ (a derived bound on the number of rows in $D(q)$, see Figure~\ref{controllerrepresentation}) products of two $\ell$-bit numerals (giving length $O(\ell+\log T)$) and whose denominator is a sum of at most $T$ such numerals (length $O(\ell+\log T)$), and both are polynomial in the size of $G$ and $D$.  Each bit of an entry is computed by iterating over the rows of the table $D(q)$ with a logarithmic-size counter.
 
\begin{lemma}\label{unprunedchainprobs}
  The $k$-th bit of an entry $E(q,q')$ in $G \times D$ can be computed in
  \textup{LOGSPACE}.  Furthermore, $E(q,q')$ has a polynomial-size rational
  representation with respect to $G$ and $D$.
\end{lemma}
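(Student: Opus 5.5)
We prove the second claim first, since it fixes the format of the object whose bits we then compute. Each entry of $E$ is either a marginal $P_{q,a}$ or a ratio of the form given in the construction. We represent every $\ell$-bit numeral in $D$ and $\delta$ as a rational whose numerator and denominator each have at most $\ell$ bits. If the representation is binary fractions, the common denominator is simply $2^{\ell}$.

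\textbf{Polynomial-size representation.} We bring all numerals to a common denominator. For binary fractions this is $2^{\ell}$ for probabilities in $D$ and $2^{2\ell}$ for products $\mathbb{P}(D(q)\sim d)\cdot\delta(q,d,q')$. For general rationals, the common denominator is the product of at most $2T$ denominators, which has $O(T\ell)$ bits and is still polynomial.
\begin{itemize}
\item The entry $P_{q,a}$ is then an integer sum of at most $T$ terms over a fixed denominator, so its numerator has $O(\ell+\log T)$ bits (binary case).
\item The entry $E(\langle q,a\rangle,q')$ is a quotient of two such rationals. After cancelling the common power of two, it is $N/M$, where $N$ is a sum of at most $T$ products of two $\ell$-bit integers and $M$ is $2^{\ell}$ times a sum of at most $T$ $\ell$-bit integers.
\end{itemize}
Both $N$ and $M$ have $O(\ell+\log T)$ bits. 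Since $T\le|D|$ and $\ell\le|D|+|G|$, this is polynomial in the input size.

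\textbf{LOGSPACE computation of $N$ and $M$.} To output the $k$-th bit of an entry, we show that each bit of $N$ and of $M$ is LOGSPACE-computable. We use two standard facts.
\begin{itemize}
\item Iterated addition of polynomially many polynomial-length integers is in LOGSPACE; it is in fact in uniform $\mathrm{TC}^0$. A direct approach computes the sum column by column, maintaining a carry of $O(\log T)$ bits. The summands are produced on the fly by a counter scanning the rows of the table $D(q)$ and checking $d[i]=a$.
\item Multiplying two $\ell$-bit numbers is in LOGSPACE via schoolbook multiplication, which is again iterated addition. So each summand $\mathbb{P}(D(q)\sim d)\cdot\delta(q,d,q')$ is bitwise LOGSPACE-computable. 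Locating the matching row $\langle q,d,q'\rangle$ in $\delta$ needs only another logarithmic counter.
\end{itemize}
Composing LOGSPACE bit-oracles stays in LOGSPACE: we recompute bits on demand rather than storing them.

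\textbf{Main obstacle: division.} The hardest step is obtaining the $k$-th bit of the quotient $N/M$, meaning the $k$-th bit of $\lfloor 2^k N/M\rfloor$, in LOGSPACE. Long division is naively only polynomial space, since it stores the remainder. We plan to invoke the Chiu--Davida--Litow result that integer division is in LOGSPACE, in fact in uniform $\mathrm{TC}^0$ by Hesse--Allender--Barrington. Its inputs $N$ and $M$ are given by LOGSPACE bit-oracles, and composition again preserves LOGSPACE.

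\textbf{Fallback.} If "computing the $k$-th bit of an entry" is meant to refer to the rational representation, that is, the bits of $N$ and $M$, then division is unnecessary. In that case the first two steps already suffice, and we would state the lemma in that form so the later algorithms work with numerator and denominator separately.
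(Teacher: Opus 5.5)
Your proposal is correct and is essentially the paper's argument: the paper also views each entry as a ratio whose numerator is a sum of at most $T$ products of two $\ell$-bit numerals and whose denominator is a sum of at most $T$ numerals, both of length $O(\ell+\log T)$, and obtains bits by scanning the rows of $D(q)$ with a logarithmic-size counter. Your version is more careful than the paper's one-line justification, which as written only gives LOGSPACE access to the numerator and denominator; that is your fallback reading, and it is exactly what the paper later needs when it virtualizes the input tape in Lemma~\ref{payoffquery}, while your appeal to Chiu--Davida--Litow correctly extends the claim to bits of the binary expansion, provided $k$ is polynomially bounded (as it is in every use the paper makes of the lemma).
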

 
\subsection{The Markov Decision Process \texorpdfstring{$G_i \times D$}{Gi x D}}
We now extend the previous construction into an MDP that models possible unilateral deviations by Player~$i$. Formally, we construct the MDP
$G_i \times D =
  \langle Q \cup (Q \times A_i), q_{\sf init},
  A_i, \tau, \mathcal{A}, R_i \rangle$.
The state space and initial state are as in $G \times D$.  States in $Q$ are \emph{uncontrolled} and transition to $\langle q,a\rangle$ with probability $P_{q,a}$ as before.  States $\langle q,a\rangle \in Q \times A_i$ are \emph{controlled}: 
when the agent chooses $a^{*} \in A_i$ (modeling a possible deviation by Player~$i$), the
probability of moving to $q' \in Q$ is
\[
  \tau\!\bigl(\langle q,a\rangle,\, a^*,\, q'\bigr)
  \;=\;
  \frac{\displaystyle\sum_{d \in D(q),d[i]=a}
        \mathbb{P}(D(q)\sim d)
        \cdot\delta\!\bigl(q,\,d[i\mapsto a^*],\,q'\bigr)}{P_{q,a}},
\]
where $d[i \mapsto a^*]$ denotes the joint-action tuple obtained from $d$
by replacing its $i$-th component with $a^*$.  Action restrictions from
$\mathcal{A}$ carry over: an action $a'$ illegal for Player~$i$ at state~$q$
in $G$ is illegal at every $\langle q,a\rangle \in Q \times A_i$.  By the
same justification as Lemma~\ref{unprunedchainprobs}, we get:
 
\begin{lemma}\label{unprunedMDPprobs}
  The $k$-th bit of an entry $\tau(\langle q,a\rangle,a^*,q')$ can be
  computed in \textup{LOGSPACE}, and this entry has a polynomial-size rational
  representation with respect to $G$ and $D$.
\end{lemma}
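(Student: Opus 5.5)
The plan is to reduce Lemma~\ref{unprunedMDPprobs} to the same two facts that underlie Lemma~\ref{unprunedchainprobs}. First, the numerator and denominator of each entry are short iterated sums of products of $\ell$-bit numerals. Second, iterated addition, multiplication of two numerals, and division are all computable bitwise in LOGSPACE when the operands are themselves LOGSPACE-bit-computable and of polynomial length. Concretely, fix a query $(\langle q,a\rangle, a^*, q', k)$ and write the entry as $N/P_{q,a}$, where
\[
N = \sum_{d \in D(q),\, d[i]=a} \mathbb{P}(D(q)\sim d)\cdot \delta\bigl(q, d[i\mapsto a^*], q'\bigr)
\]
and $P_{q,a}$ is the marginal sum already used in $G\times D$.

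The first step is to show that each summand of $N$ is LOGSPACE-accessible. We scan the rows of the table $D(q)$ with an $O(\log T)$ counter and keep only the rows $d$ with $d[i]=a$. For each kept row we must find $\delta(q, d[i\mapsto a^*], q')$. To do this we scan the transition table for a row whose triple matches $\langle q, d[i\mapsto a^*], q'\rangle$, comparing component by component. The substituted tuple is never written down; its $j$-th component is read as $a^*$ if $j=i$ and as $d[j]$ otherwise, which needs only pointers. If no matching row is found, the value is $0$. The only difference from the chain case is this on-the-fly substitution, and it costs just a constant number of extra indices.

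The second step is the arithmetic. Each product of two $\ell$-bit numerals has $O(\ell)$ bits, and each bit is computable in LOGSPACE (multiplication is in $\mathrm{NC}^1\subseteq$ LOGSPACE). Summing at most $T$ of them gives a numeral of $O(\ell+\log T)$ bits, and iterated addition is also in LOGSPACE. Composing LOGSPACE bit-oracles stays in LOGSPACE because we recompute bits on demand rather than storing them. The same argument applies to $P_{q,a}$. So the pair $(N, P_{q,a})$ is a rational with polynomial-size numerator and denominator, which proves the second claim of the lemma.

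The main obstacle is the $k$-th bit of the quotient $N/P_{q,a}$, because it is not a finite binary expansion in general. There are two options. The first is to read "the $k$-th bit of the entry" as the $k$-th bit of its rational representation, that is, of the numerator or the denominator. This is how Lemma~\ref{unprunedchainprobs} is phrased, so the claim then follows from step two. The second option, if the $k$-th binary digit of the real number is required, is to invoke the result of Chiu, Davida and Litow that integer division is in LOGSPACE: the $k$-th digit of $N/P_{q,a}$ equals $\lfloor 2^k N / P_{q,a}\rfloor \bmod 2$, and that lies in LOGSPACE for polynomially bounded $k$. In either reading, legality under $\mathcal{A}$ only means that illegal $a^*$ are never queried, which is checked by a table lookup. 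This completes the plan, "by the same justification" as Lemma~\ref{unprunedchainprobs}.
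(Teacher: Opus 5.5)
Your proposal is correct and follows the paper's own route. The paper proves this lemma only by appealing to the justification of Lemma~\ref{unprunedchainprobs}: numerator and denominator are sums of at most $T$ products of $\ell$-bit numerals, computed bitwise by scanning the rows of $D(q)$ with a logarithmic counter. You make explicit what the paper leaves implicit, namely the pointer-based lookup of $\delta(q,d[i\mapsto a^*],q')$, the LOGSPACE arithmetic, and the ambiguity of ``the $k$-th bit'' of a non-dyadic quotient; your two resolutions are both sound, and the ratio-representation reading is the one the paper actually relies on in Section~\ref{payoffcomputationdetailssection}.
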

 
\noindent
Finally, we recall that MDPs with reachability goals always admit an {\em optimal strategy}, which can always be chosen to be deterministic and memoryless (of type $(Q \times A_i) \rightarrow A_i$), so it suffices to reason about those strategies rather than the full strategy space of Definition~\ref{strategydef}~\cite{Puterman94}.

\section{Computing Payoffs Under \texorpdfstring{$D$}{D}}\label{payoffcomputesection}

In this section, we detail how to compute the expected payoff that Player~$i$ receives when all players follow the controller advice $D$ in $G$. To make this computation applicable to both correlated equilibria and subgame-perfect correlated equilibria, we show how to compute the expected payoff when the game start from an arbitrary state. To compute the relevant expected payoff received by Player~$i$ for correlated equilibria, we start the game in state $q_{\sf init}$. For subgame-perfect correlated equilibria, we start the game at the last state of a history -- a point that is further discussed in Section~\ref{deviationsection}. 

Therefore, the main goal of this section is to answer the following problem algorithmically: Given a game $G$, a controller advice $D$, and a state $q \in Q$, what is the expected payoff that Player~$i$ receives in $G$ when every player follows $D$ and the game $G$ is started in state $q$ (which may not be $q_{\sf init}$)? In this section, we demonstrate that this question can be addressed through highly efficient linear algebraic techniques.

\subsection{Processing the Markov Chain \texorpdfstring{$G \times D$}{G x D}}\label{chainprocesssection}
Markov Chains can be analyzed through the use of linear-algebraic techniques~\cite{Puterman94}. Specifically, we wish to calculate the \emph{hitting probability} of the set $R_i$ from the starting state $q \in Q$.  To uniquely solve for the hitting probability via linear-algebraic techniques, specifically solving a system of linear equations, the chain $G \times D$ must be processed first. The broad idea is to construct a new Markov Chain $G' \times D$ that is \emph{weakly-connected} and \emph{weakly-chained substochastic}~\cite{linalg}. These two properties allow us to analyze $G \times D$ on-the-fly, giving us a subpolynomial bound. The weakly-connected property states that every state in the chain must be reachable from the start state. The weakly-chained substochastic property says that every state must be able to reach a state that has outgoing transition probabilities that sum to strictly less than $1$. We also make some changes that are mathematically less significant, mainly intended to improve the presentation of our algorithms.

Our first step in processing $G \times D$ involves the introduction of two predicates $W_i$ and $L_i$ over the state space $Q \cup (Q \times A_i)$. {\bf (1)} The predicate $W_i$ is the set of states that reach $R_i$ with probability $1$. {\bf (2)} In the same manner, $L_i$ is the set of states that reach $R_i$ with probability $0$. Furthermore, $L_i$ also includes states that are not reachable from $q$ with positive probability.

\begin{lemma}\label{WiLilemma}
    The problem of deciding whether a state belongs to $W_i$ or $L_i$ is in NLOGSPACE.
\end{lemma}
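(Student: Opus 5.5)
The plan is to reduce membership in $W_i$ and $L_i$ to graph reachability questions on the underlying directed graph of $G \times D$. This graph has an edge $s \to s'$ iff $E(s,s') > 0$. By Lemma~\ref{unprunedchainprobs} we can decide positivity of an entry in LOGSPACE: we scan the rows of $D(q)$ and of $\delta$ with logarithmic counters and check whether some row $d$ with $d[i]=a$ and $\delta(q,d,q')>0$ exists. We do not need the exact value. So the edge relation is LOGSPACE-computable on the fly, and composing it with an NLOGSPACE procedure stays in NLOGSPACE.

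For $L_i$ the characterization is standard. A state $s$ reaches $R_i$ with probability $0$ iff no state of $R_i$ is graph-reachable from $s$. The paper also puts into $L_i$ the states that are not reachable from the start state $q$. So $s \in L_i$ iff either ($q \not\to^* s$) or ($s \not\to^* r$ for every $r \in R_i$). Each clause is a graph non-reachability question, which lies in coNLOGSPACE. By Immerman--Szelepcs\'enyi, coNLOGSPACE $=$ NLOGSPACE, and NLOGSPACE is closed under union, so this part is done.

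For $W_i$ I use the standard qualitative characterization for finite Markov chains. A state $s$ reaches $R_i$ with probability $1$ iff every state $s'$ reachable from $s$ without passing through $R_i$ can itself reach $R_i$. I would briefly justify this. If some such $s'$ cannot reach $R_i$, then $s'$ is hit with positive probability and $R_i$ is avoided afterwards. Conversely, if every such $s'$ can reach $R_i$, then from each one $R_i$ is hit within $|Q|(|A_i|+1)$ steps with probability at least some $p>0$. Repeated trials then give probability $1$. So $s \notin W_i$ iff there exists $s'$ with $s \to^* s'$ avoiding $R_i$ and $s' \not\to^* R_i$. This can be decided nondeterministically: guess $s'$, verify the first condition by an NLOGSPACE path guess, and verify the second with the coNLOGSPACE $=$ NLOGSPACE certificate. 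Hence the complement of $W_i$ is in NLOGSPACE, and again by Immerman--Szelepcs\'enyi $W_i$ itself is in NLOGSPACE.

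The main obstacle is justifying that the edge relation can be evaluated within logarithmic space despite the rational, ratio-valued entries of $E$. The point to stress is that only the support matters. The denominator $P_{q,a}$ is positive exactly when the state $\langle q,a\rangle$ receives an edge at all, and the numerator is positive iff some single product term is positive. Checking this needs only a pointer into the tables and a zero test on the $\ell$-bit numerals, with no arithmetic. A secondary subtlety is the requirement to take $R_i$-states as absorbing in the avoidance clause. I would handle it by simply deleting edges out of $R_i$ in the on-the-fly edge test.
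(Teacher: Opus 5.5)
Your proposal is correct and takes essentially the same route as the paper. Both arguments reduce membership in $L_i$ and $W_i$ to (non-)reachability queries on the support graph. Both witness $s \notin W_i$ by a path to a state that cannot reach $R_i$, and both close under complement via NLOGSPACE $=$ co-NLOGSPACE.

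Your version is slightly more careful than the paper's in two places. First, you require the path to $s'$ to avoid $R_i$. The paper's sketch omits this, and without it a chain $s \to r \to l$ with $r \in R_i$ and $l$ unable to reach $R_i$ would be misclassified. Second, you state explicitly that the on-the-fly edge test needs only the support of $E$, not its rational values.
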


The proofs of Lemmas~\ref{WiLilemma} and~\ref{prunedchainprobqueries} can be found in the Appendix, in Section~\ref{chainprocessesingAppendix}. In order to obtain $G' \times D$ from $G$, we consolidate all states in $W_i$ into a single target state with no outgoing transitions. All states in $L_i$ are deleted. Note that since $L_i$ contains all states that are not reachable from our starting state $q$, removing the states in $L_i$ gives us our \emph{weakly-connected} condition. Once again, we do not explicitly construct the edge relation $E'$ or $G' \times D$, but we must prove the same crucial property as in Lemma~\ref{unprunedchainprobs}.

\begin{lemma}\label{prunedchainprobqueries}
    Given two states $q,q'$ in $G' \times D$ and an index $k$, the $k$-th bit of the probability $q$ transitions to $q'$ in $G \times D$ can be computed in NLOGSPACE.
\end{lemma}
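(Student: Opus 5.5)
The plan is a case analysis on the roles of $q$ and $q'$ in the pruned chain $G' \times D$. In each case we reduce the query either to a LOGSPACE bit query on the original matrix $E$ (Lemma~\ref{unprunedchainprobs}) or to a membership query for $W_i$ or $L_i$ (Lemma~\ref{WiLilemma}).

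Recall the pruning. Every state of $W_i$ is merged into one fresh target state $t$ with no outgoing transitions. Every state of $L_i$ is deleted, so transitions into deleted states become missing mass, which is what yields the substochastic property. The cases are then as follows.

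\textbf{Case 1: $q = t$.} All outgoing probabilities are $0$, so every bit is $0$.

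\textbf{Case 2: $q$ is an original surviving state and $q' \notin W_i \cup L_i$ is also original.} The entry equals $E(q,q')$. Its $k$-th bit is computed by Lemma~\ref{unprunedchainprobs} in LOGSPACE. Beforehand we check $q, q' \notin W_i \cup L_i$ by Lemma~\ref{WiLilemma}.

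\textbf{Case 3: $q' = t$.} The entry equals $\sum_{s \in W_i} E(q,s)$. This is the delicate case. A bit of a sum cannot be read off bit by bit from the summands, because of carries, and NLOGSPACE cannot directly store the sum.

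For Case 3 I would use the representation from Lemma~\ref{unprunedchainprobs}. Every entry $E(q,s)$, for fixed $q$, is a ratio with the common denominator $P_{q,a}$ (or the table-sum denominator, for uncontrolled states). So the target entry is $N/P_{q,a}$ with
\[
N = \sum_{s\in W_i} \sum_{d \in D(q),\, d[i]=a} \mathbb{P}(D(q)\sim d)\,\delta(q,d,s),
\]
and $N$ is an iterated sum of at most $T\cdot|Q|$ products of $\ell$-bit numbers. Iterated addition and multiplication of polynomially many polynomial-length integers, together with integer division, are in uniform TC$^0$, hence in LOGSPACE. The summation ranges over a set defined by an NLOGSPACE predicate. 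Since NLOGSPACE is closed under complement and $\mathrm{L}^{\mathrm{NL}} = \mathrm{NL}$, we can evaluate the LOGSPACE arithmetic circuit while answering each query ``$s \in W_i$?'' with the NL oracle. This computes the $k$-th bit of $N/P_{q,a}$, in the rational representation the paper uses, in NLOGSPACE.

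\textbf{Unifying the cases.} Cases 1 and 2 are the special instances of the same scheme where the summation set is a singleton or empty. The whole procedure is therefore a logspace machine with NL-oracle access to $W_i$ and $L_i$ membership. This places the problem in $\mathrm{L}^{\mathrm{NL}} = \mathrm{NL}$, which completes the plan.

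The main obstacle is the merged target state in Case 3. There we must rely on logspace iterated arithmetic (Chiu–Davida–Litow or a TC$^0$ citation) composed with the NL oracle, instead of naive bitwise evaluation.
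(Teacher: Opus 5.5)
Your proposal is correct and follows the paper's own case analysis: the merged target, surviving original states, and deleted $L_i$ states, reduced to Lemma~\ref{unprunedchainprobs} for bits of $E$ and Lemma~\ref{WiLilemma} for membership in $W_i$ and $L_i$. For the merged target you are more careful than the paper, which only says the sum over $W_i$ is done ``by iterating over a polynomial-sized set using a logarithmically-sized counter'': you sum numerators over the common denominator $P_{q,a}$ with logspace iterated arithmetic and settle each summand's $W_i$-membership through an NL oracle with $\mathrm{L}^{\mathrm{NL}}=\mathrm{NL}$, which makes both the carry issue and the NL cost of the summation range explicit.
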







Note that the probability of reaching the target state in $G' \times D$ is directly related to the expected payoff in $G \times D$, which itself accurately models the expected payoff given to Player~$i$ in $G$ when all players follow $D$. This can be seen by noting that the only changes to $G' \times D$ from $G \times D$ include consolidating the set of ``good'' states and deleting both irrelevant states and states from which the payoff given to Player~$i$ is guaranteed to be $0$.  There is one small caveat to this argument -- it may be that the only way to start from state $q$ is to pass through a state $r \in R_i$, thus ensuring that in $G \times D$, whenever $q$ is reached, the payoff to Player~$i$ will automatically be $1$. We discuss this point further in Section~\ref{subgamedeviationsection}.

As in Lemma~\ref{unprunedchainprobs} and Lemma~\ref{unprunedMDPprobs}, we would also like to reason about the size of the transition probabilities in $G' \times D$ if they were represented by ratios. The interesting case in this setting is when we are considering a transition to a state in $W_i$. There are at most $k = |Q \cup (Q \times A_i)|$ states in $W_i$, and therefore the sum can have at most $k$ summands. By Lemma~\ref{unprunedchainprobs}, we know that these summands have a numerator of length at most $n = 2\ell + \log(T) + 1$ and a denominator of length at most $d = \ell \log(T) + 1$, both of which are polynomial in $G$ and $D$. In order to perform addition on the $k$ ratios, a common denominator must be found for all of the ratios. In the worst case, this involves multiplying all of the denominators together.  When two numerals of length $a$ and $b$  are multiplied, the result is a numeral of length at most $a + b$. Therefore, the denominator grows linearly with each multiplication, resulting in a polynomial-sized common denominator overall of size $O(kd)$. For the numerators, in the worst case, each is multiplied by the common denominator before they are summed together. Therefore, the numerator has a maximum size of $O(n+kd + \log(k))$, which is once again polynomial in the size of $G$ and $D$. This gives us the same structure as in Lemma~\ref{unprunedchainprobs} -- a polynomial-size ratio that can be queried in polylog space.

\begin{lemma}\label{prunedchainprobrepresentations}
    Given two states $q,q' \in Q \cup (Q \times A_i)$, $E'(q,q')$, when represented as a ratio, has polynomial size with respect to $G$ and $D$.
\end{lemma}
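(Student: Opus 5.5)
The plan is to split on the kind of transition in $G' \times D$ and show that each entry $E'(q,q')$ is either $0$, an original entry $E(q,q')$ of $G \times D$, or a sum of at most $k = |Q \cup (Q \times A_i)|$ such entries. The pruning that produces $G' \times D$ consists of deleting $L_i$ and merging $W_i$ into a single target state, so these are the only possible cases.

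First I would fix the shape of the original entries. Lemma~\ref{unprunedchainprobs} gives each $E(q,q')$ as a ratio $n_{q,q'}/d_{q}$. The numerator has length at most $n = O(\ell + \log T)$, since it is a sum of at most $T$ products of two $\ell$-bit numerals. The denominator has length at most $d = O(\ell + \log T)$, since it is either $1$ or a marginal $P_{q,a}$.

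Then I would handle the cases in turn.

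\textbf{(i)} If $q' \notin W_i$ and both states survive the pruning, then $E'(q,q') = E(q,q')$, and the bound is immediate.

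\textbf{(ii)} If $q'$ was deleted because it lies in $L_i$, the entry is simply $0$. Deleted mass only makes the chain substochastic and does not change any surviving entry.

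\textbf{(iii)} If $q'$ is the consolidated target, then $E'(q,\text{target}) = \sum_{w \in W_i} E(q,w)$, which has at most $k$ summands. Bringing these to a common denominator by multiplying all the denominators gives a denominator of length at most $kd$, since multiplying numerals of lengths $a$ and $b$ gives length at most $a+b$. Each rescaled numerator has length at most $n + kd$. Adding $k$ of them costs at most $\log k$ further bits. So the numerator has length $O(n + kd + \log k)$.

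Since $k$, $\ell$ and $\log T$ are all polynomial in $|G| + |D|$, every case gives a ratio of polynomial size.

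The only step needing care is (iii), where I would make sure the common-denominator argument does not blow up. Two observations help. All transitions out of a single source $q$ share the same denominator: it is $1$ when $q \in Q$, and $P_{q,a}$ when $q = \langle q,a\rangle$. Therefore the sum actually needs no cross-multiplication, and the crude $O(kd)$ bound is only a safety margin.

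I would also note that the target state has no outgoing transitions. Its row is all zeros, so it contributes nothing.
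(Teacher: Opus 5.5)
Your proposal is correct and follows the paper's argument. You split on whether the target of the transition survives, is deleted (entry $0$), or is the consolidated target. In the last case you bound the sum of at most $k = |Q \cup (Q \times A_i)|$ entries of $E$ by a common denominator of size $O(kd)$ and a numerator of size $O(n + kd + \log k)$, which are exactly the paper's estimates. Your extra remark is also valid. All outgoing entries from a fixed source already share the denominator $P_{q,a}$, or are plain numerals when the source lies in $Q$. This simply makes the paper's worst-case cross-multiplication bound more than is needed.
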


Furthermore, note that $G' \times D$ must be \emph{weakly-chained substochastic} -- every state that was kept in $G' \times D$ must be able to reach the target state with positive probability, and the target state has no outgoing transitions.

\subsection{Computing Payoffs}\label{payoffcomputationdetailssection}
Our objective is to now compute the hitting probability of the target state from some initial state in $G' \times D$.  We can do this by summing over the mutually exclusive events that the target state is reached in exactly $i$ steps. Formally, we have the expression $ \sum_{i=0}^{\infty} ({\sf start} \cdot E'^{i} \cdot {\sf target})$.

If we let $S$ represent the size of the pruned state space of $G'\times D$, then ${\sf start}$ is an $S \times 1$ row vector Dirac distribution that assigns probability $1$ to the initial state and $0$ on all other states and ${\sf target}$ is the $1 \times S$ column vector Dirac distribution that assigns probability $1$ to the target state and $0$ elsewhere. Finally, $E'$ is the $S \times S$ transition matrix of $G' \times D$. Since all terms are non-negative, we can rewrite the sum as
$  {\sf start} \cdot (\sum_{i=0}^{\infty}E'^{i}) \cdot {\sf target}$.
As discussed before, $E'$ is weakly-chained substochastic. This implies that the spectral radius of $E'$, $\rho(E')<1$ (Corollary 2.6,~\cite{weaklychainedhassmallspectral}). Therefore $(\sum_{i=0}^{\infty}E'^{i})$ is a Neumann series, the inverse $(I - E)^{-1}$ exists and $(\sum_{i=0}^{\infty}E'^{i}) = (I - E')^{-1}$~\cite{linalg}. Rewriting, we have 
$  {\sf start} \cdot (I-E')^{-1} \cdot {\sf target}$

From a complexity theoretic viewpoint, the dominating operation in this expression is matrix inversion, which belongs to the class NC\textsuperscript{2}~\cite{csanky76} for explicitly represented matrices. The classes NC\textsuperscript{2} and NLOGSPACE belong to the class LOG\textsuperscript{2}SPACE~\cite{johnson-chapter-handbook-of-tcs}. Let us then imagine using a LOG\textsuperscript{2}SPACE machine $\mathcal{M}$ to compute the bits of $(I-E)^{-1}$ (since LOG\textsuperscript{2}SPACE is a class of decision problems, we consider the problem of deciding a single bit). In the most standard setting, such a machine would have a polynomial-sized read-only tape along with a LOG\textsuperscript{2}SPACE-sized work tape. Since the bits of the explicit representation of $E'$ can be computed in LOGSPACE (Lemma~\ref{prunedchainprobqueries}) and the explicit form of $E'$ where probabilities are represented as ratios is polynomial in size w.r.t $G$ and $D$ (Lemma~\ref{prunedchainprobrepresentations}), the read-only input tape can be virtualized. In order for $\mathcal{M}$ to compute the bits of $(I-E)^{-1}$, it virtualizes the head of the read-only tape by storing its position through a  LOGSPACE counter (possible due to the polynomial-sized representation of $E'$) and computing its contents on-the-fly using a separate LOGSPACE worktape. Therefore, it is possible to create a LOG\textsuperscript{2}SPACE Turing machine $\mathcal{M}$ that can decide bits of $ {\sf start} \cdot (I-E')^{-1} \cdot {\sf target}$.

\begin{lemma}\label{payoffquery}
    Given an index $k$, the $k$-th bit of the expected payoff that Player~$i$ receives in $G$ when started from a state $q$ and all players follow the controller advice $D$ can be decided in LOG\textsuperscript{2}SPACE.
\end{lemma}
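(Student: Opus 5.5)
The plan is to reduce the payoff computation to a single entry of a matrix inverse and then compose space-bounded computations. First I will fix the start state $q$ and Player~$i$. If $q \in W_i$ the answer is $1$, and if $q \in L_i$ it is $0$. Both tests are in NLOGSPACE by Lemma~\ref{WiLilemma}, hence in LOG\textsuperscript{2}SPACE by Savitch's theorem. Otherwise I work in the pruned chain $G' \times D$ rooted at $q$.

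Next, I will justify that the hitting probability of the consolidated target state equals ${\sf start} \cdot (I-E')^{-1} \cdot {\sf target}$. The argument has three parts:
\begin{itemize}
  \item Deleting $L_i$ and merging $W_i$ preserves the probability of reaching $R_i$ from $q$.
  \item $E'$ is weakly-chained substochastic, so its spectral radius is below $1$ by the cited corollary.
  \item Consequently the Neumann series converges to $(I-E')^{-1}$.
\end{itemize}
The quantity sought is therefore a single entry of $(I-E')^{-1}$. By Cramer's rule, this entry is a ratio of two determinants of polynomial-size rational matrices. Its $k$-th bit is obtained by computing these determinants in NC\textsuperscript{2}, using Csanky's algorithm or, to avoid division, Berkowitz's algorithm. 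The entries are first scaled to integers by the common denominators bounded in Lemma~\ref{prunedchainprobrepresentations}. A final integer division is then performed, which is in NC\textsuperscript{1} and hence fine.

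The key step, and the main obstacle, is composition. The NC\textsuperscript{2} (hence LOG\textsuperscript{2}SPACE) machine expects an explicit polynomial-size input tape, but here the input tape is virtual: each bit of $E'$ must be recomputed on demand, and by Lemma~\ref{prunedchainprobqueries} this is an NLOGSPACE query because it involves membership in $W_i$ and $L_i$. I will use the standard composition lemma for space-bounded computation. Whenever the outer machine's head reads position $p$ of the virtual tape, it recomputes that bit from scratch using the inner procedure. The position counter takes $O(\log n)$ space, since the representation of $E'$ is polynomial by Lemma~\ref{prunedchainprobrepresentations}. The inner NLOGSPACE query is simulated deterministically in $O(\log^2 n)$ space via Savitch's theorem, and the workspace is reused across queries. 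The total space is $O(\log^2 n) + O(\log^2 n)$, which remains LOG\textsuperscript{2}SPACE.

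One care point within this step concerns the inner query's answer. Since it must be a definite bit rather than a nondeterministic acceptance, I will rely on NLOGSPACE being closed under complement, by Immerman–Szelepcsényi. Savitch's simulation is also deterministic, so either justification works. Combining all of the above yields a LOG\textsuperscript{2}SPACE decision procedure for the $k$-th bit of the expected payoff.
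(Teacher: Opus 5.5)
Your proposal is correct and follows essentially the paper's route. You write the payoff as ${\sf start}\cdot(I-E')^{-1}\cdot{\sf target}$ via the Neumann series for the weakly-chained substochastic pruned chain, and you obtain the needed inverse entry in NC\textsuperscript{2} (phrasing it through Cramer's rule and determinants is the same step as the paper's appeal to NC\textsuperscript{2} matrix inversion). You then run the LOG\textsuperscript{2}SPACE machine on a virtual input tape whose bits are recomputed on demand, using a logarithmic position counter. Your use of Savitch's theorem to determinize the NLOGSPACE entry queries of Lemma~\ref{prunedchainprobqueries} is more careful than the paper, which treats those queries as LOGSPACE. The Immerman--Szelepcs\'{e}nyi remark is not needed: only the deterministic Savitch simulation lets the deterministic outer machine consume the answers.
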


Furthermore, by Lemma~\ref{prunedchainprobrepresentations}, the entries of $(I-E')$, when represented as ratios, have polynomial size w.r.t. $G$ and $D$. Then, by Corollary 3.2a of~\cite{schrijver1998}, the entries of $(I-E')^{-1}$, when represented as ratios, have polynomial size with respect to $G$ and $D$. Therefore, we can conclude that $  {\sf start} \cdot (I-E')^{-1} \cdot {\sf target}$ can be represented by a ratio of polynomial size in $G$ and $D$. The complete picture is then that we have a polynomial-size representation that can be queried in polylog space.

\begin{lemma}\label{payoffsize}
    The expected payoff of Player~$i$ in the game $G$ obtained when all players follow the controller advice $D$ can be represented as a ratio of polynomial size in $G$ and $D$.
\end{lemma}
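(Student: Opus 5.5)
The plan is to show that the expected payoff equals $x = {\sf start}\cdot(I-E')^{-1}\cdot{\sf target}$, the hitting probability of the target state in the pruned chain $G'\times D$ from the relevant starting state, and then to bound the bit-size of that rational number. There is one degenerate case: the starting state may lie in $W_i$ or $L_i$, or the history may already have visited $R_i$. In that case the payoff is exactly $0$ or $1$, which trivially has constant size. So I assume $G'\times D$ is nonempty with the starting state different from the target.

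\textbf{Step 1: the linear system.} By Lemma~\ref{prunedchainprobrepresentations}, every entry of $E'$ is a rational of size polynomial in $G$ and $D$. The matrix has dimension $S \le |Q|(1+|A_i|)$. I bring all entries of $I-E'$ to a common denominator $M$, the product of all entry denominators. Its bit-length is at most $S^2$ times a polynomial, hence still polynomial. This gives $I-E' = \frac{1}{M}B$ with $B$ an integer matrix whose entries have polynomial bit-length $\beta$.

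Invertibility of $I-E'$ follows from the weakly-chained substochastic property, since it gives spectral radius $\rho(E')<1$ as argued before Lemma~\ref{payoffquery}. Then
\[
(I-E')^{-1} = M\,B^{-1} = \frac{M\,\mathrm{adj}(B)}{\det B}.
\]

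\textbf{Step 2: the size bound.} By Hadamard's inequality, $|\det B|$ and every cofactor of $B$ are bounded by $(\sqrt{S}\,2^{\beta})^{S}$. Their bit-lengths are therefore $O(S\beta + S\log S)$, which is polynomial. This is essentially Corollary~3.2a of~\cite{schrijver1998}, and I would simply cite it.

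The payoff $x$ is a single entry of $(I-E')^{-1}$, namely the entry in the row of the start state and the column of the target. So $x$ is the ratio of $M$ times one cofactor of $B$ to $\det B$. Both numerator and denominator have polynomial bit-length, and reducing the fraction only shrinks them.

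\textbf{Step 3: correctness of the identification.} I still need that this quantity really is Player~$i$'s expected payoff. Merging $W_i$ into the target and deleting $L_i$ preserves the hitting probability of $R_i$ from the start state. The series $\sum_k E'^k$ counts disjoint first-hitting events, because the target is absorbing with no outgoing transitions.

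I expect the main obstacle to be the common-denominator blow-up in Step~1. The argument has to check that multiplying up to $S^2$ polynomial-size denominators stays polynomial: the bit-lengths add, so the total is at most a sum of $S^2$ polynomial terms. It also has to check that the Hadamard bound is applied to the integer matrix $B$ rather than to the rational matrix $I-E'$ directly.
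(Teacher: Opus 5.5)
Your proposal is correct and takes essentially the same route as the paper. Both arguments get polynomial-size entries of $I-E'$ from Lemma~\ref{prunedchainprobrepresentations} and then apply Corollary~3.2a of~\cite{schrijver1998} to bound $(I-E')^{-1}$, and hence ${\sf start}\cdot(I-E')^{-1}\cdot{\sf target}$. You additionally unpack that corollary (common denominator, adjugate formula, Hadamard's inequality) and spell out the degenerate start-state cases and the identification of the hitting probability with the payoff, both of which the paper leaves implicit.
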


\section{Existence of a Deviation}\label{deviationsection}

As outlined in Section~\ref{outlinesection}, computing the expected payoff a player receives when all players follow the controller advice $D$ is only half the puzzle; it must also be determined whether this value is optimal. In this section, we tackle the scenarios that arise when Player~$i$ is allowed to deviate from $D$. Our goal is to characterize whether or not a profitable deviation for Player~$i$ exists. It is important to note that this does not necessarily imply that we \emph{must} solve the MDP $G_i \times D$ in order to obtain its optimal value; indeed, whether we must solve $G_i \times D$ or not plays a critical role in distinguishing the correlated equilibrium verification problem and the subgame-perfect correlated equilibrium verification problem.

\paragraph*{Subgame-Perfect Correlated Equilibria}\label{subgamedeviationsection}

In order to analyze the relationship between subgames and states in $G \times D$, it is important to address the point that, unlike infinite-horizon goals such as the parity condition, the reachability condition depends on the observed history.  For a Player~$i$, we now introduce the terminology of \emph{relevant subgames}, which are induced by \emph{relevant histories}.

\begin{definition}[Relevant History]\label{relhistorydef}
    Given a Player~$i$ in a game $G$, a \emph{relevant} history is a history 
    $h \in (Q \times A)^* \times Q$ that never visits a state in $R_i$.
\end{definition}

Intuitively, this means that the history does not contain a pair $\langle q ,a \rangle  \in Q \times A^{\Pi}$ such that $q \in R_i$ nor an element $q \in Q$ with $q \in R_i$. The idea is that for Player~$i$, histories that already visited $R_i$ are ``irrelevant'' from a deviation standpoint since they already achieved his reachability goal and guaranteed him his maximal payoff. This naturally leads us to a definition for relevant subgames.

\begin{definition}[Relevant Subgame]
    Given a Player~$i$ in a game $G$, a relevant subgame is a subgame $G|_h$ (Definition~\ref{subgamedef}) induced by a history $h$ that is relevant to the Player~$i$.
\end{definition}

The idea is then that potential deviations by Player~$i$ in a relevant subgame induced by a relevant history with last element $q \in Q$ can be analyzed by starting the Markov Decision Process $G_i \times D$ in state $q$. This is relatively straightforward -- a relevant subgame represents a plausible partial execution of the game that did not already visit $R_i$, and so Player~$i$ is still motivated to maximize his probability of visiting $R_i$. Therefore, it is useful to refer to relevant subgames 
by the last state the corresponding history visits, i.e., the relevant subgame $G|_h$ for the relevant history $h$ with last element $q \in Q$ is also referred to as $G_q$.  Determining whether a state $q$ can be the last state of some relevant history can be done in NLOGSPACE, as the history can be guessed on-the-fly. In such cases, we call the state $q$ \emph{relevant} as well, as there is some subgame that incentivizes Player~$i$ to maximize his probability of reaching $R_i$ when the game is started in state $q$. If there is no history that reaches $q$ without also visiting $R_i$, (for example, a state $q \in R_i$), then $q$ is \emph{irrelevant} -- every time Player~$i$ is in state $q$, they have already met their goal and no longer have a preference on further executions of the game. 

The main point of this section is to establish Theorem~\ref{existencedeviationsubgame}, which equates a controller advice $D$ not being a subgame-perfect correlated equilibrium in a game $G$ to a highly local condition, which we call a ``one-step'' deviation to reflect the fact that $D$ is only deviated from at a single state.

\begin{theorem}[One-Step Deviation]\label{existencedeviationsubgame}
Given a controller advice $D$ in a game $G$, suppose that $D$ is not a subgame-perfect correlated equilibrium in $G$, i.e. there is some Player~$i$ and some relevant subgame $G_{q^*}$ such that Player~$i$ is incentivized to deviate in $G_{q^*}$. Then, in the MDP $G_i \times D$, there exists a state $\langle q, a \rangle \in (Q \times A_i)$ 
that satisfies the following:
{\bf (1)} $q$ is relevant. {\bf (2)} $\langle q ,a \rangle \in  (Q \times A_i)$ is reachable from $q \in Q$ with positive probability. {\bf (3)} There exists an action $a^* \in A_i$ such that by choosing the action $a^*$ instead of the action $a$ at the state $\langle q,a \rangle$ and following the recommendations of $D$ everywhere else (playing the action $a'$ at a state $\langle q',a' \rangle$) in the MDP $G_i \times D$, the hitting probability of the set $G_i$ from $\langle q, a \rangle$ is strictly larger than the hitting probability of $G_i$ from $\langle q,a \rangle$ in $G \times D$.
\end{theorem}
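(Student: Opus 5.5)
The plan is to reduce the statement to the standard one-shot deviation principle for MDPs with reachability objectives, proved by contraposition via a fixed-point argument. We read ``the set $G_i$'' in item (3) as the target set $R_i$.

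\textbf{Reduction to the MDP.} Fix Player~$i$ and let $v(s)$ denote the hitting probability of $R_i$ from $s \in Q \cup (Q\times A_i)$ in the Markov chain $G \times D$. Let $v^*(s)$ be the optimal value of the MDP $G_i\times D$ from $s$.

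\begin{itemize}
\item Since $G_{q^*}$ is relevant, the history inducing it has not visited $R_i$. So Player~$i$'s payoff in $G_{q^*}$ is exactly the hitting probability of $R_i$ from $q^*$.
\item Player~$i$'s deviations in $G_{q^*}$ are captured by strategies in $G_i \times D$ started at $q^*$. We then invoke the fact, recalled after Lemma~\ref{unprunedMDPprobs}, that optimal strategies can be taken deterministic and memoryless.
\item Hence the hypothesis gives $v^*(q^*) > v(q^*)$.
\end{itemize}

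\textbf{Contrapositive.} Assume that no state $\langle q,a\rangle$ satisfies (1)--(3). We aim to show $v \ge v^*$ on all states that are relevant or reachable from $q^*$, which contradicts $v^*(q^*) > v(q^*)$.

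Let $S$ be the set of states reachable from $q^*$ in $G_i \times D$ by paths that avoid $R_i$, under arbitrary choices of Player~$i$.
\begin{itemize}
\item Every $q \in S\cap Q$ is relevant: prefix the history inducing $G_{q^*}$ with the witnessing path.
\item Every $\langle q,a\rangle \in S$ has $P_{q,a}>0$, so it is reachable from $q$ with positive probability.
\end{itemize}

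Define the Bellman operator $\mathcal{T}$ on vectors $x \in [0,1]^{S}$ as follows:
\begin{itemize}
\item $(\mathcal{T}x)(s)=1$ if $s\in R_i$;
\item at uncontrolled $q$, $(\mathcal{T}x)(q)=\sum_a P_{q,a}\,x(\langle q,a\rangle)$;
\item at controlled $\langle q,a\rangle$, $(\mathcal{T}x)(\langle q,a\rangle)=\max_{a^*}\sum_{q'}\tau(\langle q,a\rangle,a^*,q')\,x(q')$.
\end{itemize}
Values outside $S$ are left fixed at $v^*$. This does not matter, because $S$ is closed under transitions that do not enter $R_i$.

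\textbf{Key steps in order.}
\begin{enumerate}
\item $v^*$ restricted to $S$ is the \emph{least} fixed point of $\mathcal{T}$. This is the standard characterization of maximal reachability probabilities~\cite{Puterman94}. By Knaster--Tarski it is also the least pre-fixed point: any $x \geq 0$ with $\mathcal{T}x \le x$ satisfies $x \ge v^*$.
\item $v$ is a pre-fixed point on $S$.
 \begin{itemize}
 \item At uncontrolled states and at $R_i$, $\mathcal{T}v = v$. This holds because $v$ is the hitting probability in $G\times D$, and the transitions out of $q\in Q$ coincide in the chain and the MDP.
 \item At a controlled $\langle q,a\rangle\in S$, choosing $a^*=a$ reproduces the chain's transition, giving exactly $v(\langle q,a\rangle)$.
 \item Any other $a^*$ yields $\sum_{q'}\tau(\langle q,a\rangle,a^*,q')\,v(q')$. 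This is precisely the hitting probability from $\langle q,a\rangle$ when deviating once to $a^*$ and following $D$ afterwards.
 \item Since (1) and (2) hold for $\langle q,a\rangle$, the negated condition (3) says this quantity is $\le v(\langle q,a\rangle)$. Thus $\mathcal{T}v\le v$.
 \end{itemize}
\item Conclude $v\ge v^*$ on $S$, in particular at $q^*$. This is a contradiction.
\end{enumerate}

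\textbf{Main obstacle.} The delicate part is step~1 together with the bookkeeping of relevance. Reachability values are the \emph{least} fixed point, not the unique one, so a direct ``greatest-fixed-point'' or contraction argument fails. The contraposition must go through the pre-fixed-point characterization, not an equality of fixed points.

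One must also check that restricting to $S$ loses nothing:
\begin{itemize}
\item States entered only after visiting $R_i$ contribute value $1$ in both $v$ and $v^*$.
\item States with $P_{q,a}=0$ are never visited.
\end{itemize}
Only states that genuinely satisfy (1) and (2) are therefore used when invoking the negation of (3). If the least-fixed-point citation needs justification, I would prove it directly. Take an optimal memoryless strategy $\sigma$ and consider its $n$-step truncated reach probabilities $u_n$. By induction on $n$, using monotonicity of $\mathcal{T}$ and $\mathcal{T}v\le v$, we get $u_n\le v$. Letting $n\to\infty$ gives $v^*\le v$.
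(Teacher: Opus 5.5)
Your proof is correct, and it takes a genuinely different route from the paper's. The paper argues forward. The follow-$D$ policy $\pi^D_i$ is suboptimal at $q^*$, so one step of policy iteration produces \emph{some} controlled state with a strictly improving switch. The paper then asserts that these witnesses cannot all sit at states $\langle q,a\rangle$ with $P_{q,a}=0$ or with $q$ irrelevant, since otherwise the value at $q^*$ would already be optimal. You prove the contrapositive instead. Restricting to the set $S$ of states reachable from $q^*$ while avoiding $R_i$ makes conditions (1) and (2) hold automatically. The absence of improving switches then makes $v$ a pre-fixed point of the Bellman operator on $S$, and $v\ge v^*$ follows because $v^*$ is its least fixed point. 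The paper's route is shorter and leans on a well-known algorithmic fact, but both of its load-bearing steps are left informal. First, that ``no improving switch'' forces optimality is exactly the least-fixed-point property you invoke. It is specific to maximizing: for minimal reachability, a policy with no strictly improving switch need not be optimal. Second, the claim that witnesses at irrelevant or unreachable states cannot account for the gap at $q^*$ is exactly what your choice of $S$ makes rigorous. Your truncation argument, $u_{n+1}=\mathcal{T}_\sigma u_n\le\mathcal{T}_\sigma v\le\mathcal{T}v\le v$ with $\mathcal{T}_\sigma$ the operator for the fixed strategy $\sigma$, even makes the whole proof self-contained.

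One step deserves a line you did not write. As the paper uses condition (3) (the chain $B$ in Section~\ref{subgamecomplexitysection}), it is a \emph{stationary} switch, with $a^*$ played at every visit to $\langle q,a\rangle$. You instead read it as the one-shot value $\sum_{q'}\tau(\langle q,a\rangle,a^*,q')\,v(q')$. These readings are interchangeable, but you should say why. Suppose that, after playing $a^*$ at $\langle q,a\rangle$ and then following $D$, $r$ is the probability of hitting $R_i$ before returning to $\langle q,a\rangle$, and $p$ is the probability of returning first. Then the one-shot value is $r+p\,v(\langle q,a\rangle)$ and the stationary value is $r/(1-p)$ (or $0$ if $p=1$). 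Each exceeds $v(\langle q,a\rangle)$ exactly when $r>(1-p)\,v(\langle q,a\rangle)$.

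Two smaller points. For relevance of $S\cap Q$ you mean \emph{extending} (not prefixing) the history that induces $G_{q^*}$. You should also note that an MDP step from $\langle q,b\rangle$ under choice $a^*$ to $q'$ comes from a legal joint action $d[i\mapsto a^*]$, with $d$ in the support of $D(q)$ and $\delta(q,d[i\mapsto a^*],q')>0$. So the extension is a genuine history of $G$. Finally, the appeal to memoryless optimal strategies in your reduction is unnecessary, since $v^*$ is already a supremum over all strategies.
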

The proof of this theorem can be found in Section~\ref{onestepsection}.

\paragraph*{Correlated Equilibria}\label{correlateddeviationsection}
In this paragraph, we briefly discuss why the argument of Theorem~\ref{existencedeviationsubgame} does \emph{not} apply to the standard correlated equilibrium. In Theorem~\ref{existencedeviationsubgame}, we guaranteed the existence of \emph{some} witness, but did not specify \emph{at which state} the hitting probability was improved from. Therefore, we cannot conclude that the logic of Theorem~\ref{existencedeviationsubgame} would produce an improvement at $q_{\sf init}$ specifically.  In Section~\ref{correlatedcomplexitysection}, we demonstrate that this difference is pathological and, under standard complexity-theoretic assumptions, seperates the complexity of the subgame-perfect correlated equilibrium verification problem and the correlated equilibrium verification problem.

\section{Computational Complexity}\label{complexitysection}

\subsection{Subgame-Perfect Correlated Equilibria}\label{subgamecomplexitysection}

By Theorem~\ref{existencedeviationsubgame}, we know that if a controller advice $D$ is not a subgame-perfect correlated equilibrium in $G$, then this is witnessed by a ``one-step'' deviation from a single state $\langle q,a \rangle \in G_i \times D$. The idea is now to construct two different Markov Chains. Markov Chain $A$ is $G \times D$ started in state $q$. Markov Chain $B$ is $G \times D$ started in state $q$ with the one-step deviation at $\langle q,a \rangle$ applied. That is to say, $D$ is followed at every state of $B$ except $\langle q,a \rangle$, and at this state the action $a^*$ is chosen instead of $a$. This single state $q$ that both chains start in, and the one-step deviation change in $B$ can be guessed in NLOGSPACE. Furthermore, it can be verified that $q$ is relevant and $\langle q,a \rangle$ is reachable from $q$ with positive probability in NLOGSPACE as well.

Our goal is now to compute the hitting probability from $q$ in $A$, which we call $A_q$, and compare it to the hitting probability from $q$ in $B$, which we call $B_q$. We know that if $D$ is not a subgame-perfect correlated equilibrium, then such a $q$ must exist where the hitting probability in $B$ must be strictly larger. As shown in Lemma~\ref{payoffquery} the bits of each of the hitting probabilities can be computed in LOG\textsuperscript{2}SPACE. The idea is then to compare both probabilities lexicographically to show that the hitting probability in $B$ is higher than $A$. Therefore, the idea is to keep track of a counter $k$ and query the $k$-th bit of both $A_q$ and $B_q$.

These hitting probabilities, however, may be non-terminating numerals, meaning that the counter $k$ may grow without bound. By Lemma~\ref{payoffsize}, if we represent these hitting probabilities as ratios, then they have size polynomial in $G$ and $D$. Denote $A_q = \frac{n_1}{d_1}$ and $B_q = \frac{n_2}{d_2}$. If these two numbers are not equal, then they must differ by at least $\frac{1}{d_1 \cdot d_2}$.
Both $d_1$ and $d_2$ have polynomial size w.r.t $G$ and $D$, so the product $d_1 \cdot d_2$ has polynomial size w.r.t $G$ and $D$.  In terms of value, this means that the difference $|A_q - B_q|$  is at least $2^{-p(|G|,|D|)}$ for some polynomial $p$. Therefore, if we assume that $A_q$ and $B_q$ are different and compare their numeral expansions bit-by-bit, we must find a difference before the $p(|G|,|D|) + 1$-st bit. This means that we can compare $A_q$ and $B_q$ by checking the first $p(|G|,|D|)$ bits only, as if $A_q$ and $B_q$ agree on the first $p(|G|,|D|)$ bits, then they must be equal. Finally, we can conclude that when the counter $k$ is represented in binary, we can represent it in LOGSPACE, giving us the last piece of our LOG\textsuperscript{2}SPACE algorithm to determine if $D$ is \emph{not} a subgame-perfect correlated equilibrium in $G$. Since LOG\textsuperscript{2}SPACE is closed under complementation~\cite{papacomplexity}, we get a final upper bound on the complexity of the subgame-perfect correlated equilibrium verification problem.

\begin{theorem}\label{subgameverificationcomplexitytheorem}
    The subgame-perfect correlated equilibrium verification problem belongs to LOG\textsuperscript{2}SPACE.
\end{theorem}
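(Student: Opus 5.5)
The plan is to decide the complement problem, ``$D$ is \emph{not} a subgame-perfect correlated equilibrium in $G$'', in LOG\textsuperscript{2}SPACE, and then use closure of LOG\textsuperscript{2}SPACE under complementation. By Theorem~\ref{existencedeviationsubgame}, non-equilibrium is equivalent to the existence of a finite witness: a Player~$i$, a state $\langle q,a\rangle$, and an action $a^*\in\mathcal{A}(q,i)$. These must satisfy three conditions: $q$ is relevant, $\langle q,a\rangle$ is reachable from $q$ with positive probability, and the one-step deviation to $a^*$ at $\langle q,a\rangle$ strictly improves the hitting probability.

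For the converse direction, such a witness yields a profitable deviation in the subgame $G_q$. Since $q$ is relevant, $G_q$ is a relevant subgame, and playing $a^*$ upon recommendation $a$ while following $D$ elsewhere gives a strictly larger expected payoff. So the witness condition is an exact characterization. The algorithm iterates deterministically over all candidate tuples $(i,q,a,a^*)$, each of which fits in logarithmic space.

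Conditions (1) and (2) are reachability questions on the (implicit) graphs of $G$ and $G\times D$. Both are in NLOGSPACE, because edge existence is decidable in LOGSPACE by Lemma~\ref{unprunedchainprobs}. By Savitch's theorem, NLOGSPACE is contained in LOG\textsuperscript{2}SPACE, so both can be checked within the budget.

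For condition (3), the algorithm compares two hitting probabilities from $q$:
\begin{itemize}
\item $A_q$, computed in $G\times D$;
\item $B_q$, computed in the chain $B$ obtained from $G\times D$ by replacing the outgoing row of $\langle q,a\rangle$ with $\tau(\langle q,a\rangle,a^*,\cdot)$.
\end{itemize}
The entries of $B$ are bit-queryable in LOGSPACE by Lemma~\ref{unprunedMDPprobs} and have polynomial-size rational representations. Hence the pruning of Section~\ref{chainprocesssection} and Lemmas~\ref{payoffquery} and~\ref{payoffsize} apply to $B$ exactly as to $G\times D$.

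The two values are compared bit by bit. By Lemma~\ref{payoffsize}, $A_q=n_1/d_1$ and $B_q=n_2/d_2$ with $d_1,d_2$ of polynomial bit-length. Therefore, if $A_q\ne B_q$, then $|A_q-B_q|\ge 1/(d_1d_2)\ge 2^{-p(|G|,|D|)}$ for some polynomial $p$. It thus suffices to scan the first $p(|G|,|D|)+1$ bits with a logarithmic counter $k$, querying the $k$-th bit of each value via the LOG\textsuperscript{2}SPACE procedure of Lemma~\ref{payoffquery}. The first differing bit decides whether $B_q>A_q$. Composing these subroutines stays in LOG\textsuperscript{2}SPACE, since each is called sequentially and its work space is reused.

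The main obstacle I expect is the correctness of the composition. The inner LOG\textsuperscript{2}SPACE machine reads a virtual input tape whose cells are produced by NLOGSPACE computations (the pruning predicates $W_i$ and $L_i$ of Lemma~\ref{prunedchainprobqueries}). The fix is to resolve those NLOGSPACE predicates deterministically in $O(\log^2)$ space via Savitch's theorem, recomputing on demand, so the total space remains $O(\log^2 n)$.

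A related subtlety is a possible mismatch between the binary expansion and the exact rational value: terminating expansions have two representations, $0.0111\ldots = 0.1000\ldots$. This is avoided by computing the bits of the exact quotients canonically, for example by long division of the polynomial-size numerator and denominator obtained from Csanky-style inversion. I would take the bit-query interface of Lemma~\ref{payoffquery} to provide exactly this.
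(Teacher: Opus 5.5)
Your proposal is correct and follows the paper's proof: decide the complement by finding a one-step deviation witness from Theorem~\ref{existencedeviationsubgame}, check relevance and reachability in NLOGSPACE, and compare the hitting probabilities $A_q$ and $B_q$ bit by bit (via Lemma~\ref{payoffquery}) up to the polynomial bound given by Lemma~\ref{payoffsize}, then complement. Two things you add are worthwhile tightenings of that same argument. First, you enumerate the logarithmic-size witness deterministically instead of ``guessing'' it, which as literally stated would only give nondeterministic $\log^2$ space. Second, you spell out the converse direction (witness implies non-equilibrium), which the paper leaves implicit.
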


\subsection{Correlated Equilibria}\label{correlatedcomplexitysection}

In this section, we demonstrate that the correlated equilibrium verification problem is P-complete. The MDP $G_i \times D$ can be explicitly solved in polynomial time~\cite{Puterman94}. Recall that a policy is a strategy that is both deterministic and memoryless, and that in all MDPs, there is always a policy that achieves the optimal value. Therefore, this optimal value can be computed and explicitly represented in polynomial time, and it can be directly compared to the directly represented expected payoff that Player~$i$ receives in the Markov Chain $G \times D$ in polynomial time. 
\begin{theorem}\label{corrverPupperboundonly}
    The correlated equilibrium verification problem belongs to P.
\end{theorem}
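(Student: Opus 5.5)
The plan is a reduction of the correlated equilibrium verification problem to polynomially many MDP optimal-value computations, each compared with a Markov-chain hitting probability. We fix a player $i \in \Pi$ and handle each player separately; the number of players is at most the input size. The two objects to compare are $\rho_i(D)$ and $\sup_{\pi_i \in \Pi_i} \rho_i(D,\pi_i)$, both taken from $q_{\sf init}$. By construction, $\rho_i(D)$ is the hitting probability of $R_i$ from $q_{\sf init}$ in $G \times D$. The supremum is the optimal reachability value of the MDP $G_i \times D$ from $q_{\sf init}$.

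\textbf{Justifying the MDP correspondence.} The first step, which needs the most care, is to check that this MDP faithfully captures all deviations of Definition~\ref{strategydef}. A strategy $\pi_i$ maps histories of (state, recommended action) pairs to distributions over $A_i$. The information Player~$i$ has at each decision point is exactly the current state $q$ together with his recommendation $a$, along with the past such pairs. The others' actions are drawn from the conditional distribution of $D(q)$ given $d[i]=a$, which is precisely what $\tau$ encodes. So every $\pi_i$ induces a (history-dependent, randomized) strategy in $G_i \times D$ with the same reach probability, and conversely.

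By the result recalled at the end of Section~\ref{constructionssections}, the optimal value is achieved by a deterministic memoryless policy. Hence $\sup_{\pi_i}\rho_i(D,\pi_i)$ equals the MDP optimum, and the supremum is attained.

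\textbf{Polynomial-time computation.} The explicit MDP $G_i \times D$ is built in polynomial time. It has $|Q|(1+|A_i|)$ states, and by Lemma~\ref{unprunedMDPprobs} every transition probability is a polynomial-size rational computable in logspace, hence in polynomial time. Its optimal reachability value from $q_{\sf init}$ is then computed exactly by the standard linear program for MDP reachability~\cite{Puterman94}: first compute the qualitative sets (value $0$ and value $1$) by graph analysis, then solve the LP with rational data in polynomial time. This yields an exact rational of polynomial size.

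Similarly, $\rho_i(D)$ is obtained by solving the linear system $(I-E')x = b$ after the pruning of Section~\ref{chainprocesssection}. This takes polynomial time by Gaussian elimination, and by Lemma~\ref{payoffsize} the result is a polynomial-size rational. Comparing two explicit rationals is polynomial.

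\textbf{Conclusion and main obstacle.} $D$ is a correlated equilibrium iff, for every $i$, the optimal MDP value equals $\rho_i(D)$. (It is always at least $\rho_i(D)$, since following $D$ is one policy.) Iterating over players keeps the procedure polynomial. I expect the only real obstacle to be the faithfulness argument in the first step, namely that Bayesian updating on advice is fully captured by the $\langle q,a\rangle$ states. I would establish it by a direct induction on history length, showing that the induced distributions over histories coincide.
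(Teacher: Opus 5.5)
Your proposal is correct and follows essentially the same route as the paper: for each player $i$, solve the MDP $G_i \times D$ exactly in polynomial time (the optimum is attained by a deterministic memoryless policy), compute $\rho_i(D)$ from $G \times D$, and compare the two polynomial-size rationals. Your explicit argument that $G_i \times D$ faithfully captures all history-dependent deviations of Definition~\ref{strategydef}, and your concrete use of linear programming and Gaussian elimination, fill in details that the paper leaves implicit.
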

We prove hardness via a reduction from the P-complete Circuit Value Problem (CVP)~\cite{papacomplexity}; the details are in Section~\ref{Preductionsection} of the Appendix.

\begin{theorem}\label{corrverificationcompletenesstheorem}
    The correlated equilibrium verification problem is P-complete.
\end{theorem}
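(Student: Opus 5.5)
Membership in P is Theorem~\ref{corrverPupperboundonly}, so the work is P-hardness under LOGSPACE reductions. I would reduce from the monotone Circuit Value Problem, which stays P-complete when restricted to AND/OR gates with fan-in two. The reduction builds a game in which a single deviating player, Player~$1$, faces an MDP that evaluates the circuit. Intuitively, an OR gate is a state where Player~$1$ picks an input, and an AND gate is a state where nature picks an input uniformly at random.

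Given a monotone circuit $C$ with output gate $g_{\sf out}$, I would build $G$ as follows. There is one state $q_g$ for each gate $g$, plus two absorbing sinks ${\sf win}\in R_1$ and ${\sf lose}$. An input gate with value $1$ moves to ${\sf win}$, and one with value $0$ moves to ${\sf lose}$.

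\begin{itemize}
\item At an OR gate $g$ with inputs $g_1,g_2$, Player~$1$ has actions $\{x_1,x_2\}$, and action $x_j$ leads to $q_{g_j}$.
\item At an AND gate, Player~$1$ has a single action, and $\delta$ moves to $q_{g_1}$ or $q_{g_2}$ with probability $1/2$ each.
\end{itemize}

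Each gate state has exactly one transition per action, so $\delta$ is tiny. Every entry is a constant-size numeral, and the tables can be written gate by gate with logarithmic-size counters, which makes this a LOGSPACE map. Since $C$ is acyclic, every play reaches a sink.

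A second player is needed so that the question of whether the circuit is true can be phrased against the payoff under $D$. Let $D$ recommend $x_1$ deterministically at every OR gate. A cleaner way to pin down the baseline payoff is to add a new initial state $q_{\sf init}$. There Player~$1$ has actions ${\sf stay}$ and ${\sf go}$: ${\sf stay}$ moves with probability $p$ to ${\sf win}$ and otherwise to ${\sf lose}$, while ${\sf go}$ moves to $q_{g_{\sf out}}$. The advice $D$ recommends ${\sf stay}$, so $\rho_1(D)=p$.

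The optimal value of the MDP $G_1\times D$ from $q_{g_{\sf out}}$ is the value of the circuit evaluated with OR as max and AND as average. The key lemma is that this value equals $1$ if $C$ evaluates to true, and is at most $1-2^{-m}$ if $C$ is false, where $m$ is the circuit depth. I would prove it by induction on depth: a false AND gate has a false child, hence value at most $\frac12+\frac12(1-2^{-(m-1)})=1-2^{-m}$, and a false OR gate has two false children. Choosing $p=1-2^{-(m+1)}$, a numeral of $O(m)$ bits written in LOGSPACE, a profitable deviation exists iff $C$ is true. So $D$ is a correlated equilibrium iff $C$ is false. P is closed under complement, so this suffices for P-hardness of the verification problem.

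The remaining players are dummies with one action each and goals that never matter, and they cannot deviate. The main obstacle is ensuring the reduction is genuinely LOGSPACE. In particular, writing $p$ requires knowing a depth bound $m$; I would just use the number of gates $n$, giving $p=1-2^{-(n+1)}$ with $n+1$ bits. The same care is needed for the DAG structure and for the uniformity of all table entries.

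Worth noting: the construction puts randomness only in $\delta$, and $D$ is deterministic, so the hardness is already present for point-mass advice. The contrast with Theorem~\ref{subgameverificationcomplexitytheorem} comes from the optimal deviation possibly changing actions at many OR gates simultaneously. In the subgame-perfect setting, by contrast, a single-gate improvement already appears at some relevant state.
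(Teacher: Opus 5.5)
Your proposal is correct and is essentially the paper's proof: the paper also adapts the Papadimitriou--Tsitsiklis CVP-to-MDP reduction, with OR gates as controlled states and AND gates as uniform $\frac12$ branching. It prepends a fixed-payoff option ${\sf fp}$ at the start, recommended by $D$, whose value $1-\frac{1}{2^i}+\frac{1}{2^{i+1}} = 1-2^{-(i+1)}$ is exactly your $p$ (with $i$ the number of gates). You are more careful than the paper about the reduction being LOGSPACE and about it targeting the complement of the verification problem, but your dummy players and the remark that a second player is needed are superfluous, since the paper's one-player game already suffices.
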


This result contradicts the intuition that correlated equilibria should admit an easier verification problem than subgame-perfect correlated equilibria, a strengthening of the former. This is due to the fact that the correlated equilibrium concept does not admit a ``one-step'' deviation as in Theorem~\ref{existencedeviationsubgame}. In Theorem~\ref{existencedeviationsubgame}, some witness state for a local deviation was proven to exist, but it was not clear which state this was. In the correlated equilibrium verification problem, deviations must increase the value at the initial state specifically. This may require more than one local change depending on the structure of $G_i \times D$, as solving an MDP is an inherently sequential problem~\cite{PapadimitrioucomplexityofMDP}. For subgame-perfect correlated equilibria, this is no longer the case, and the fact that we can witness a deviation at an arbitrary ``relevant'' state makes the problem inherently parallelizable. This distinction between the sequential and parallel approaches is aptly illustrated by contrasting the use of NC algorithms for the subgame-perfect correlated equilibrium verification problem with the P-complete result for the correlated equilibrium verification problem. 


\section{Succinct Representations via Bayesian Belief Networks}\label{bayesiannetworksection}\label{BNdefsections}

The complexities of both the correlated equilibrium verification problem (Theorem~\ref{corrverificationcompletenesstheorem}) and the subgame-perfect correlated equilibrium verification problem (Theorem~\ref{subgameverificationcomplexitytheorem}) were heavily influenced by the representation of the transition table and the controller advice. Specifically, we represented both as explicit tables. A natural next question is then to study the complexities of both problems when a more succinct representation is used. For this, we turn to the popular concept of a Bayesian Network (BN), a graphical model used to represent probability distributions in a succinct manner~\cite{principlesofmodelcheckingbook,bayesiannetworkinference}.

A Bayesian Network is a triple $\langle V, E, \Theta \rangle$~\cite{bayesiannetworkinference,bayesianinferencecomplexity}, where $V$ is a set of vertices, $E$ a set of directional edges, and $\Theta$ a set of prior and conditional probabilities. Together, the pair $\langle V,E \rangle$ forms a directed acyclic graph. The set $\Theta$ associates each vertex in $V$ to either a \emph{prior distribution}, if it has no incoming edges, or a \emph{conditional distribution}, if it does. Intuitively, this allows even distributions with a large support to be represented succinctly, as we illustrate with an example in Section~\ref{bayesianexamplesection}.

In this section, the BN model is used to represent both the controller advice $D$ and the transition table $\delta$.  Just as in Section~\ref{definitionsection}, we start by representing the controller advice, which is given in a curried format  -- one $D_q$ for each state $q \in Q$. This BN has a vertex $A_i$ associated with each player. The idea is that the probability that this variable takes a specific value such as $1$ is the same probability that the controller advice recommends Player~$i$ the action $1$. In order to accomplish this, we allow the controller advice to assign prior distributions to some of the $A_i$ vertices and then use conditional distributions and extra computational vertices to assign distributions to the rest. By allowing for these general computations, we allow for succinct representations, as demonstrated in Figure~\ref{BNfigure} in Section~\ref{bayesianexamplesection}.

We now move on to the transition function $\delta$, which we assume is provided in the same curried manner as the controller advice -- there is one transition function $\delta_q : A^{\Pi} \rightarrow {\sf Dist}(Q)$ for each state. Unlike the controller advice, the function $\delta_q$ has a set of inputs $A^{\Pi}$ even after currying. We therefore extend the Bayesian Network framework slightly by extending the definition of $\Theta$. Previously, each vertex $v \in V$ was associated with either a prior or a conditional distribution under $\Theta$. Now, we allow for an additional third possibility -- that a vertex $v \in V$ can be simply designated ``input'', and not associated with a distribution.  In order to query the probability that a given state $q_i \in Q$ is the next successor state from a state $q$ upon a given joint action input, $\delta_q$ contains $|Q|$ ``output'' vertices such that the probability that $q_i$ is the next state sampled in $G$ when the joint action $\overline{a} \in A^{\Pi}$ is chosen from state $q$ is equal to the probability that the vertex $q_i$ takes the value $1$ in the BN $\delta_q$ when the input vertices are instantiated to the values of $\overline{a}$. This general framework is illustrated in Figure~\ref{BNfiguretrans}.  

\begin{figure}[ht]
\centering
\begin{tikzpicture}[
  scale=0.5, every node/.style={scale=0.5},
  arr/.style={->, >=stealth, thick, color=blue!60!black},
  inp/.style={
    circle, draw, thick,
    minimum size=1.0cm,
    fill=blue!30!gray!50,
    text=black,
    font=\small\itshape,
    inner sep=2pt
  },
  outnode/.style={
    circle, draw, thick,
    minimum size=1.0cm,
    fill=green!40!gray!40,
    text=black,
    font=\small\itshape,
    inner sep=2pt
  },
]


\node[inp] (A1) at (2.0,  0) {$A_1$};
\node[inp] (A2) at (4.5,  0) {$A_2$};
\node[inp] (A3) at (7.0,  0) {$A_3$};
\node[inp] (Ak) at (12.0, 0) {$A_k$};

\node at (9.7, 0) {$\cdots$};

\node[draw, thick, minimum width=1.5cm, minimum height=0.7cm,
      font=\small, fill=white] at (0.2, 0) {Input};


\node[draw, very thick, minimum width=11.5cm, minimum height=1.8cm,
      font=\large, fill=white, align=center]
  (compbox) at (7.0, -2.5) {Computational Vertices};


\draw[arr] (A1.south) -- (A1.south |- compbox.north);
\draw[arr] (A2.south) -- (A2.south |- compbox.north);
\draw[arr] (A3.south) -- (A3.south |- compbox.north);
\draw[arr] (Ak.south) -- (Ak.south |- compbox.north);


\node[outnode] (q1) at (2.0,  -5.2) {$q_1$};
\node[outnode] (q2) at (4.5,  -5.2) {$q_2$};
\node[outnode] (q3) at (7.0,  -5.2) {$q_3$};
\node[outnode] (qm) at (12.0, -5.2) {$q_m$};

\node at (9.7, -5.2) {$\cdots$};


\draw[arr] (compbox.south -| q1) -- (q1.north);
\draw[arr] (compbox.south -| q2) -- (q2.north);
\draw[arr] (compbox.south -| q3) -- (q3.north);
\draw[arr] (compbox.south -| qm) -- (qm.north);

\end{tikzpicture}
\caption{The layout of a transition function Bayesian Network
$\delta_q$ for a game with $|\Pi|=k$ players and $|Q|=m$ states.
The $k$ input vertices $A_1,\ldots,A_k$ feed into a block of
computational vertices, whose $m$ output vertices $q_1,\ldots,q_m$
give the probability of transitioning to each successor state.}
\label{BNfiguretrans}
\end{figure}
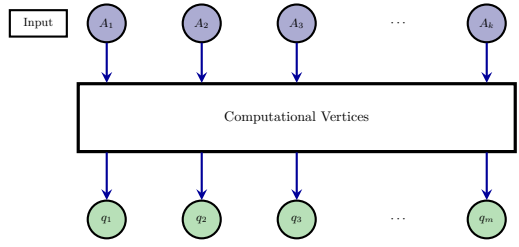

As demonstrated in Figure~\ref{BNfiguretrans},  there are $|\Pi| = k$ input vertices that are not mapped to a distribution by $\Theta$, but are rather simply labeled `input'.  Furthermore, just as we introduced new vertices to compute the value of $A_{2k+1}$ in Figure~\ref{BNfigure}, we similarly allow for intermediary computational vertices, and, by doing so, allow for succinct representations when possible.  Although the $m$ output vertices $q_1 \ldots q_m$ are associated with conditional distributions by $\Theta$, we do not specify these distributions in Figure~\ref{BNfiguretrans}, as, unlike Figure~\ref{BNfigure}, we are not working with a concrete example.

The canonical Bayesian Network model does not have ``input'' vertices. It may then seem, however, that we must iterate over all possible inputs and make $|A^{\Pi}|$ different queries on $\delta_q$ in the worst case in order to compute the probability that $q$ transitions to some other state $q'$ when the controller advice $D_q$ is followed. Fortunately, this is not the case. A key observation is that the two Bayesian Networks -- the ones representing $D_q$ and $\delta_q$ -- can be directly composed, as the input vertex $A_i$ in $\delta_q$ can easily be assigned a conditional distribution that matches its value to the vertex $A_i$ in $D$ (the same construction was used in Figure~\ref{BNfigure} to give $A_{k+i}$ the same distribution as $A_i$). This creates one single Bayesian Network without input vertices that can be queried to determine the value of $E(q,q')$, the probability needed for our Markov Chain construction in Section~\ref{constructionssections}, with one minor caveat. As opposed to a state space of $Q \cup (Q \times A_i)$, we directly construct a chain over $Q \times Q$ (recall that in Section~\ref{constructionssections}, the state space of $Q \cup (Q \times A_i)$ in the Markov Chain $G \times D$ was introduced as a notational convenience meant to minimize the notational overhead incurred when constructing the Markov Decision Process $G_i \times D$).

For the MDP $G_i \times \pi$, however,  the state space $Q \times (Q \cup A_i)$ becomes relevant once again. First, we need the probability that at a state $q$, the controller advice $D$ recommends an action $a$ to determine the probability that the MDP transitions to the state $\langle q,a \rangle$. This can be done by directly querying the controller advice $D_q$ for the probability that the vertex $A_i$, which represents player i's action, takes the value `$a$'.

We also need the transition probabilities for states in $(Q \cup A_i)$ where the agent in the MDP is allowed to choose an action, recalling that such probabilities are given by the transition function $\tau$. For this, we construct a series of transition function BNs (without input vertices) $\tau_{\langle q, a \rangle, b}$ that represent the scenarios in which the MDP is in state $\langle q, a \rangle$ but action $b$ was chosen instead of $a$. While these may seem like complex constructions, they can be constructed in a straightforward manner given $\delta_q$ and $D_q$. In order to condition on the event that action $a$ was recommended to Player~$i$, we can condition the controller advice BN $D_q$ to use the prior distribution $(\mathbb{P}(A_i = a) = 1)$. In order to model the event that Player~$i$ chose to play action `$b$' instead of action `$a$', we assign $A_i$ the value of $b$ in all subsequent computations in $\delta_q$ (a substitution that can easily be done with an intermediate variable that takes value $b$ with probability $1$). Once again, the controller advice $D_q$ can be composed with the transition function $\delta_q$ to create a single BN $\tau_{\langle q, a\rangle, b}$ without input vertices. In this way, we create a polynomial number of BNs that allow us to compute the transition probabilities given by $\tau$ in the Markov Decision Process $G_i \times D$. Furthermore, the straightforward nature of the constructions clearly implies that succinct representations are preserved -- if $D_q$ and $\delta_q$ have succinct representations, then so do all $\tau_{\langle q,a \rangle ,b}$. 

Therefore, given BN representations of $D$ and $\delta$ (which, for ease of presentation, were both curried), we can construct BN representations of both $G \times D$ and $G_i \times D$ as in Section~\ref{constructionssections}.  If we wish to unfold these representations, it should be noted that querying a BN is \#P-complete~\cite{bayesianinferencecomplexity,arorabarakcomplexity}. In order to get explicit representations of $G \times D$ and $G_i \times D$, we must query these BNs a polynomial number of times in order to establish all transition probabilities. This construction can therefore be done in P\textsuperscript{\#P}, the complexity class of polynomial time Turing machines with access to a \#P oracle, which is equal to P\textsuperscript{PP}~\cite{complexityzoo,arorabarakcomplexity}. 

Once $G \times D$ and $G_i \times D$ are constructed, the rest of the verification algorithms can proceed exactly as in the explicit table representation case, outlined in Section~\ref{outlinesection}. Both verification procedures can be decided in polynomial time (Theorems~\ref{subgameverificationcomplexitytheorem} and~\ref{corrverificationcompletenesstheorem}), giving us an overall complexity of P\textsuperscript{PP}. 

Furthermore, since querying the exact probability that a vertex takes on a value in a single Bayesian network is \#P-complete~\cite{bayesianinferencecomplexity}, the general problem of querying the bits of an output probability must be at least PP-hard~\cite{arorabarakcomplexity}. We use this for our hardness result since it is unknown whether P\textsuperscript{PP} = PP and, unlike P = NP, there does not seem to be convincing structural evidence supporting either equality or separation.

\begin{theorem}\label{BNrepcomplexitytheorem}
    When given Bayesian Network representations of $D$ and $\delta$, the subgame-perfect correlated equilibrium verification problem and the correlated verification problem are in P\textsuperscript{PP} and are PP-hard.
\end{theorem}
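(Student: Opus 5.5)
The plan has two parts: an upper bound of P\textsuperscript{PP} and a lower bound of PP-hardness. Each part is handled the same way for both verification problems.

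\textbf{Upper bound.} The machine is a polynomial-time oracle machine with a PP (equivalently \#P) oracle.

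First, it composes the BNs. For every state $q$, it builds the composed network of $D_q$ and $\delta_q$ described above, in which the input vertex $A_j$ of $\delta_q$ is copied from $A_j$ of $D_q$. For every $\langle q,a\rangle$ and $b\in A_i$, it builds the network $\tau_{\langle q,a\rangle,b}$. Each construction is a syntactic polynomial-time transformation.

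Second, it extracts probabilities with oracle calls. The needed numbers are:
\begin{itemize}
\item the marginals $P_{q,a}=\mathbb{P}(A_i=a)$;
\item the joint probabilities $\mathbb{P}(A_i=a \wedge q'=1)$;
\item the analogous quantities for the deviated networks.
\end{itemize}
Each is an exact BN marginal. The standard result is that such a marginal, scaled by the product of the CPT denominators, is a \#P function. So its bits can be read off with polynomially many PP queries, via binary search or by recovering the \#P value bit by bit. The resulting numerals have polynomially many bits, because the common denominator of a BN with $\ell$-bit CPT entries has size polynomial in the network. The entries of $E$ and $\tau$ are ratios of these numbers, so they also have polynomial-size rational representations, as in Lemmas~\ref{unprunedchainprobs} and~\ref{unprunedMDPprobs}.

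Third, it runs the explicit algorithm. With $G\times D$ and $G_i\times D$ now written out explicitly in polynomial size, it runs the explicit-table algorithms. For correlated equilibria this is the polynomial-time procedure of Theorem~\ref{corrverPupperboundonly}. For subgame-perfect correlated equilibria it is the LOG\textsuperscript{2}SPACE procedure of Theorem~\ref{subgameverificationcomplexitytheorem}, which we simply run in polynomial time: we enumerate the candidate one-step deviations of Theorem~\ref{existencedeviationsubgame} and compare the exact rational hitting probabilities obtained by linear-system solving.

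The one point to verify carefully is that the explicit algorithms never needed the specific table format. They only used the transition matrices and their polynomial bit-size. That justifies replacing the tables by oracle-computed matrices. The total is P\textsuperscript{PP}.

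\textbf{Hardness.} I would reduce from the PP-complete problem of deciding whether $\mathbb{P}(X=1)>\theta$ for a vertex $X$ of an input-free Bayesian network $N$, or equivalently from MAJSAT encoded as a BN.

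The gadget uses two players and the following states: $q_{\sf init}$, a goal state $r\in R_1$, a sink $s$, and an intermediate state $u$ reached after the first step. At $q_{\sf init}$, $D$ recommends only action $c$ to both players, so $D_{q_{\sf init}}$ is trivial. The transition network $\delta_{q_{\sf init}}$ contains $N$ as computational vertices.
\begin{itemize}
\item Under joint action ``$c$ by Player~1'', the output vertex for $r$ copies $X$ and the output for $s$ is its complement.
\item Under the alternative action $d$ of Player~1, the game moves to $r$ with the fixed rational probability $\theta$ (an $\ell$-bit prior) and to $s$ otherwise.
\end{itemize}
States $r$ and $s$ are absorbing, and every other player is given a trivial goal. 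Then Player~1's payoff under $D$ is $\mathbb{P}(X=1)$, and the one-step deviation to $d$ yields $\theta$. Hence $D$ is a correlated equilibrium iff $\mathbb{P}(X=1)\ge\theta$.

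The same holds for subgame-perfect correlated equilibria, because the only relevant non-trivial subgame is rooted at $q_{\sf init}$; $r$ and $s$ admit no profitable deviation. The complement of ``$\mathbb{P}(X=1)\ge\theta$'' is PP-complete, since PP is closed under complement. This gives PP-hardness of both problems under polynomial-time (indeed logspace) many-one reductions.

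\textbf{Main obstacle.} I expect the hardest step to be the hardness gadget's bookkeeping. The BN for $\delta_{q_{\sf init}}$ must be a valid network in the paper's input-vertex format, with outputs $q_j$ that give a proper distribution over successors. To enforce this, I would derive $s$ deterministically as the negation of $r$ and set all other outputs to $0$. The threshold convention must also match the PP-complete problem exactly: either use strict versus non-strict threshold, or shift $\theta$ by half an ulp.
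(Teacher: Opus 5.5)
Your proposal is correct. The upper bound is the paper's route: compose the networks for $D_q$ and $\delta_q$, use polynomially many \#P/PP queries to write out $G\times D$ and $G_i\times D$ with polynomial-size rationals, then run the explicit-table algorithms, giving P\textsuperscript{PP}.

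The hardness half genuinely differs. The paper gives no reduction to the verification problem. It only observes that computing (bits of) a BN marginal is \#P-complete, hence PP-hard, and infers hardness from the fact that its algorithm makes such queries. That does not by itself lower-bound the decision problem. Your gadget supplies the missing many-one reduction from BN threshold inference or MAJSAT. It uses trivial advice at $q_{\sf init}$, a transition network that embeds $N$ under the recommended action and a $\theta$-coin under the alternative, and absorbing $r,s$. This gives: $D$ is a (subgame-perfect) correlated equilibrium iff $\mathbb{P}(X=1)\ge\theta$. It covers both problems at once and shows that hardness already comes from $\delta$ alone, with an explicit trivial $D$.

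Your upper bound is also more careful than the paper's text in two places.
\begin{enumerate}
\item For the subgame-perfect case you enumerate one-step deviations and solve linear systems exactly in polynomial time. The paper instead appeals to LOG\textsuperscript{2}SPACE, which is not known to be contained in P. So drop the phrase ``run the LOG\textsuperscript{2}SPACE procedure in polynomial time''; what you actually describe is the right argument.
\item You obtain $\tau(\langle q,a\rangle,b,q')$ as the ratio $\mathbb{P}(A_i=a\wedge q'=1)/P_{q,a}$. This is genuine conditioning on the recommendation, provided the deviated network leaves $D_q$ untouched and only overrides player~$i$'s input to $\delta_q$ by $b$. Clamping the CPT of $A_i$ to $a$, as in the text's construction of $\tau_{\langle q,a\rangle,b}$, is an intervention. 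It coincides with conditioning when $A_i$ is a root, but not in general. For example, clamping $A_{2k+1}$ in the XOR example leaves $A_1,\dots,A_k$ uniform instead of restricting their parity.
\end{enumerate}

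Cosmetically, the state $u$ in your gadget is never used, and one player suffices.
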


Note that this overall complexity is almost entirely driven by the complexity of reasoning about the Bayesian Network representations of the controller advice $D$ and the transition function $\delta$. This complexity ends up dominating the complexity of the verification problems, overshadowing the separation we witnessed before between the subgame-perfect correlated equilibrium verification problem and the correlated equilibrium verification problem. 

\section{Conclusion}\label{conclusion}

In this paper, we studied the computational complexity of the verification problem for correlated equilibria and subgame-perfect correlated equilibria in concurrent probabilistic games, under two input representations: explicit tables
(Theorems~\ref{corrverificationcompletenesstheorem}
and~\ref{subgameverificationcomplexitytheorem}) and Bayesian Networks
(Theorem~\ref{BNrepcomplexitytheorem}). Our results paint an interesting picture. Under explicit tables, the subgame-perfect variant is actually \emph{easier} than for  correlated equilibria, since its one-step deviation
property makes it inherently parallelizable. A similar observation was also made in the following
recent results presented in~\cite{RV25}. Nevertheless, this advantage disappears under the Bayesian
Network representation, where the cost of querying the network dominates, and both problems collapse to the same complexity. These findings underscore the importance of multivariate complexity analysis in equilibrium analysis: the choice of representation for the controller
advice and the transition function can be as consequential as the equilibrium concept itself.

There are two natural directions for future work. First, the $\varepsilon$-equilibrium variants: while we expect that the correlated-equilibria results will carry over largely
unchanged, we conjecture that the $\varepsilon$-subgame-perfect correlated-equilibrium verification problem is P-complete, yielding yet another counterintuitive reversal. Second, the realizability problem, which asks
whether an equilibrium \emph{exists} in a given game, remains
open~\cite{challenge}, partly because reachability objectives induce discontinuous payoff functions that block classical fixed-point arguments
(see Section~\ref{discont} in the Appendix). The verification techniques developed here, combined with non-deterministic guessing, offer a path towards realizability results in the probabilistic
setting~\cite{principlesofmodelcheckingbook}.

\bibliography{bib}

\section{Overview}\label{outlinesection}

Both correlated equilibria and subgame correlated equilibria address the question of whether a player in a game $G$ has an incentive to change their strategy and deviate from the controller advice $D$ to receive a higher expected payoff. Therefore, our algorithms proceed in two broad steps: one to analyze expected payoffs when all players follow $D$, and the other to analyze expected payoffs when one player deviates from $D$. For convenience's sake, all constructions are made with respect to a single Player~$i$. The correlated equilibrium only considers unilateral deviations made by individual players. This allows us to treat them neatly by analyzing each Player~$i$ separately.
 
In order to mathematically analyze the expected payoffs when players follow $D$ or deviate, we introduce two constructions in Section~\ref{constructionssections}. When all players commit to following $D$ in $G$, the interaction of the two creates a Markov Chain, and when one player deviates, but the rest follow $D$, the result is a Markov Decision Process (MDP). Thus, the chain enables us to analyze the case in which all players follow $D$, and the MDP enables us to analyze the case when a specific player deviates, thereby covering both scenarios. The details regarding the Markov Chain computations can be found in Section~\ref{payoffcomputesection}. In Section~\ref{deviationsection}, we characterize the conditions needed to determine whether a player has a profitable deviation from $D$ or not, which impacts whether we have to explicitly solve our MDP construction or not. This analysis culminates in an overall characterization of the complexities of the verification problems in Section~\ref{complexitysection}. Section~\ref{BNdefsections} is a stand-alone section that reconsiders the complexities of the verification problems when the inputs are represented in a succinct manner by utilizing Bayesian Networks.

\section{Examples}\label{examplesection}

In this section, we provide a few examples to help the reader distinguish between correlated equilibria, subgame-perfect correlated equilibria, and Nash equilibria. The purpose of this section is to provide the reader with simple examples meant to enhance their intuition, and, as such, the games considered in this section do not invoke the full expressive power of the probabilistic concurrent game graph model presented in Definition~\ref{pCGS}. We assume some familiarity with fundamental game-theoretic concepts such as the Nash equilibrium, as in~\cite{Osborne1994}.

The classic example used to differentiate Nash equilibria from correlated equilibria is the so-called ``Chicken Game''. While there are many instances of the Chicken Game in the game theory literature, this specific example is taken from the Wikipedia article titled  ``Correlated Equilibrium''~\cite{wikicorrelated}, the details of which the authors have independently verified.  The Chicken Game is a two-player bi-matrix game. This means that each of the two players chooses either {\bf (C)}hicken out or {\bf (D)}are simultaneously, and, depending on their joint choice, they receive the following payoffs.

\begin{table}[ht]
\centering
\begin{tabular}{c|cc}
 & $\mathrm{C}$ & $\mathrm{D}$ \\
\hline
$\mathrm{C}$ & $(6,6)$ & $(2,7)$ \\
$\mathrm{D}$ & $(7,2)$ & $(0,0)$
\end{tabular}
\caption{Payoffs in the Chicken Game}
\end{table}

In the Chicken Game, the two players are ``horizontal'' and ``vertical''. For example, if horizontal chooses (C)hicken out and vertical chooses (D)are, then the corresponding entry (2,7) records their respective payoffs: 2 to the horizontal player and 7 to the vertical player.

Intuitively, the Chicken Game is a model of the famous scene from the movie ``Rebel Without a Cause''. In this scene, two drivers are racing to the edge of a cliff. The drivers can choose to either chicken out, meaning they safely veer off to the side, or dare, meaning they commit to racing towards the cliff. If one driver dares while the other chickens out, the former has proven his bravery and receives the highest payoff of 7, while the latter receives a payoff of 2. If both chicken out, then at least they are both equally safe and cowardly, so they receive a payoff of 6. Finally, if both drivers dare, then they both race off the cliff edge -- the worst outcome, represented by the payoff of 0.

There are two pure-strategy Nash equilibria and one mixed Nash equilibrium in the Chicken Game.  The pure-strategy equilibria are $(C,D)$ and $(D,C)$. In these equilibria, the ``total expected welfare'' of both players is $2+7=9$. The mixed equilibrium is when each player chooses $C$ with probability $\frac{2}{3}$ and $D$ with probability $\frac{1}{3}$. In this equilibrium, the ``total expected welfare'' of both players is $2 \cdot (\frac{4}{9} (6) + \frac{2}{9}(7) + \frac{2}{9}(2) + \frac{1}{9}(0)) =\frac{28}{3}$. It is worth noting that each Nash equilibrium can be transformed into an equivalent correlated equilibrium by constructing a controller that matches the joint distribution created by the product of the individual strategies used in the Nash equilibrium. 

The possibility of a correlated equilibrium, however, raises another possibility. Consider a controller advice that recommends each of the tuples $(C,C), (C,D),$ and $(D,C)$ with probability $\frac{1}{3}$. When a player is recommended $C$, they recieve an expected payoff of $\frac{1}{2}(6) + \frac{1}{2}(2) = 4$, which is greater than the payoff $\frac{1}{2}(7) + \frac{1}{2}(0) = 3.5$ that they recieve when they deviate to $D$ under the assumption that other player will follow the controller advice. Likewise, when a player is recommended $D$, they receive a payoff of $7$, which is larger than the payoff of $6$ that they would receive if they were to deviate to $C$. Therefore, this controller advice is a correlated equilibrium that offers a total welfare of $2 \cdot (\frac{1}{3}(6) + \frac{1}{3}(7) + \frac{1}{3}(2)) = 10$. Note that this behavior cannot be expressed as a product of individual strategies, because the controller correlates the actions of both players to ensure that $(D,D)$ is never chosen despite $(D,C)$ and $(C,D)$ appearing with positive probability. Therefore, there is no way to express this behavior as a Nash equilibrium. Furthermore, this correlated equilibrium offers both players the highest total welfare, and it does so symmetrically. It is therefore reasonable to consider this correlated equilibrium as the optimal outcome for players in the Chicken Game, highlighting the difference between the Nash and correlated equilibrium concepts.

The difference between correlated equilibria and subgame-perfect correlated equilibria is best illustrated by the difference between Nash equilibria and subgame-perfect equilibria. This is because the differences between the two pairs of concepts are fundamentally the same. The Nash and correlated equilibrium only consider payoffs at the initial state, whereas the subgame-perfect and subgame-perfect correlated equilibrium concepts consider payoffs at every state. even if that state is not reachable from the initial state. In order to demonstrate the difference, we construct a simple two-player turn-based game called the Market Entry Game that is often used to demonstrate the concept of an ``empty threat''~\cite{Osborne1994}.

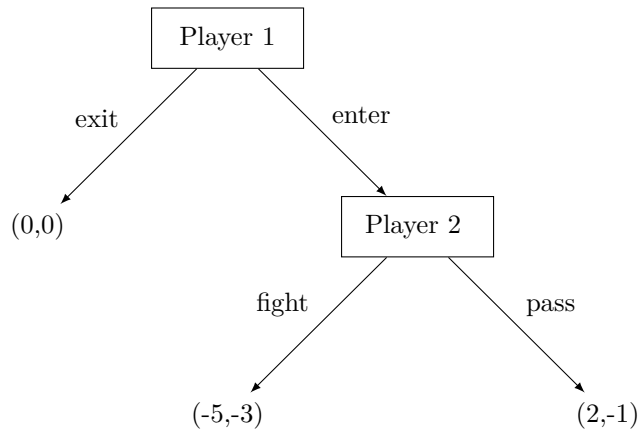
\begin{figure}[ht]
\centering
\begin{tikzpicture}[
  level distance=2.5cm,
  sibling distance=5cm,
  player/.style={rectangle, draw, minimum width=2cm, minimum height=0.8cm, align=center},
  edge from parent/.style={draw, -latex}
]

\node[player] {Player 1}
    child { 
        node { (0,0) } 
        edge from parent node[midway, above left] {exit} 
    }
    child { 
        node[player] { Player 2 }
            child { node { (-5,-3) } edge from parent node[midway, above left] {fight} }
            child { node { (2,-1) } edge from parent node[midway, above right] {pass} }
        edge from parent node[midway, above right] {enter} 
    };

\end{tikzpicture}
\caption{Payoffs in the Market Entry Game}
\label{fig:marketentry}
\end{figure}

In the Market Entry Game, player 1, a small emerging firm, first chooses whether or not to enter or exit a market. If they choose to exit, both players receive a payoff of $0$, which corresponds to the ``status quo'' being upheld. If, however, player 1 decides to enter the market, then player 2, a large, established firm, can choose whether to fight this entry or let it pass. If they choose to fight, they incur a payoff of $-3$, but the small firm suffers more, with a payoff of $-5$. If they pass, on the other hand, player 1's small firm enters the market and receives a payoff of $2$, which comes at the expense of player 2's payoff of $-1$.

The state labeled Player 1 is the initial state. The pair of actions (Exit, Fight) is both a Nash equilibrium and a correlated equilibrium (where the controller advises Player 1 to choose exit and Player 2 to choose fight). If Player 2 makes a choice, they choose the $(-5,-3)$ payoff option, so it is rational for Player 1 to choose $(0,0)$ over this penalty. Likewise, Player 2 does not get a chance to make a move, so they are also not interested in deviating. This is the key difference between the subgame-perfect equilibria concepts and their counterparts -- the tree rooted at Player 2 is now a subgame that is reached with probability 0 under the (Exit, Fight) behavior. The question now becomes whether Player 2 is required to act rationally in this subgame.

In the Nash equilibrium and the correlated equilibrium, Player 2 is not required to behave rationally (choose pass for a payoff of $-1$ over fight for a payoff of $-3$), because the game does not reach this point in (Exit, Fight). Therefore, they can ''threaten'' Player 1 into choosing exit by choosing fight. In the subgame-perfect equilibrium and the subgame-perfect correlated equilibrium concepts, they are required to behave rationally, meaning that the only rational choice is to pass. Knowing this, Player 1 chooses to enter. Therefore, (Enter, Pass) and (Exit, Fight) are both Nash equilibria and correlated equilibria, but only (Enter,Pass) is a subgame-perfect equilibrium and a subgame-perfect correlated equilibrium.

\section{The Discontinuity of Reachability Games}\label{discont}

A standard strategy to prove the existence of Nash equilibria in a multi-player game is to appeal to a fixed-point theorem, such as Brouwer or Kakutani~\cite{Osborne1994}. In order to apply such a fixed-point theorem directly to our game setting, it is necessary to have a mapping from the set of strategies to the set of payoffs that is at least upper hemi-continuous (a slightly more general condition that continuity). 

In order to see where the issue lies, consider the reachability MDP in Figure~\ref{discontMDP}. The agent (once again using our notation that the actor in an MDP is an \emph{agent}, and the actors in a game are \emph{players}) starts in state $q_{\sf init}$ and wishes to reach $q_1$. All transitions are deterministic.

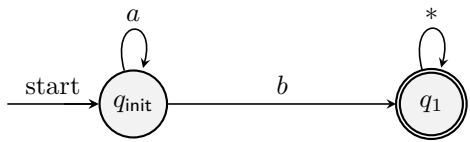
\begin{figure}[ht]
\centering
\begin{tikzpicture}[->, >=stealth, auto, node distance=3cm, semithick]

  \node[state, initial] (q0) {$q_{\sf init}$};
  \node[state, accepting, right=of q0] (q1) {$q_1$};

  \path
    (q0) edge[loop above] node{$a$} ()
         edge node{$b$} (q1)
    (q1) edge[loop above] node{$*$} ();

  \node[left=1.2cm of q0] (start) {};
  \draw[->] (start) -- node[above]{start} (q0);

\end{tikzpicture}
\caption{A simple MDP that demonstrates the discontinuities endemic to reachability games.}
\label{discontMDP}
\end{figure}
For our example, we only consider stationary strategies, i.e. ones that do not employ memory. Therefore, the probability $p$ with which the agent plays `b' at $q_{\sf init}$ determines their entire strategy. Note now that if `b' is played with probability $p > 0$, then the payoff to the agent is 1 -- the goal of $q_1$ will eventually be reached because `b' will eventually be played by the agent, and the reachability goal is evaluated over an undiscounted infinite horizon. 

Once, however, he agent plays `b' with probability $0$, then their payoff is $0$. Therefore, we can construct an infinite sequence of strategies that converge in the limit to playing `b' with probability 0, but always play `b' with probability $p>0$ and therefore receive a payoff of $1$ (for example, the $n$-th strategy for $n \in \mathbb{N}$ plays `b' with probability $\frac{1}{n}$). This discontinuity in the limit is exactly the issue that disqualifies the use of the Brouwer or Kakutani fixed-point theorems. 

Note that if we introduce a discount factor, a finite-time horizon, or both, then this discontinuity issue is resolved. Thus, these represent popular ways to consider systems with more well-understood and well-behaved mathematical properties. Our setting, however, is the undiscounted infinite-horizon setting, and so it admits these mathematical complications.

\section{Proofs of Lemmas in Section~\ref{chainprocesssection}}\label{chainprocessesingAppendix}

This section complements Section~\ref{chainprocesssection}. Here, we provide proofs for Lemma~\ref{WiLilemma} and Lemma~\ref{prunedchainprobqueries}.

\begin{manuallemma}{11}
    The problem of deciding whether a state belongs to $W_i$ or $L_i$ is in NLOGSPACE.
\end{manuallemma}
\begin{proof}
    The basic idea behind this statement is that determining whether a state belongs to $W_i$ or $L_i$ fundamentally reduces to a series of reachability queries that can be conducted in NLOGSPACE~\cite{papacomplexity}.

    For $L_i$, determining whether a state $q'$ reaches $R_i$ with probability $0$ can be done by complementing a reachability query. First, we can guess a state $r \in R_i$. Then, we can conduct a reachability query from $q'$ to $r$ to determine if there is a path that occurs with positive probability from $q$ to $r$. This procedure can be done in NLOGSPACE and determines the set of states that reach the set $R_i$ with positive probability. By complementing this algorithm in NLOGSPACE, which is possible since NLOGSPACE = co-NLOGSPACE~\cite{papacomplexity}, we get an NLOGSPACE algorithm to determine whether a state $q'$ reaches $R_i$ with probability $0$. States $q'$ that are not reachable from $q$ are even easier to determine, as this is directly a reachability query from $q$ to $q'$.

    For $W_i$, accounting for the set of states that belong to $R_i$ is trivial. The important part is then to determine if a state $q'$ that does not belong to $R_i$ reaches the set $R_i$ with probability $1$. This can be done in a similar way to $L_i$. First, we guess a state $l$ in $L_i$ that reaches $R_i$ with probability $0$ -- note that such a state must exist if it is not the case that all states reach $R_i$ with probability $1$. As described above, this property of reaching $R_i$ with probability $0$ can be determined in NLOGSPACE. Finally, we can guess a path from $q'$ to $l$ in NLOGSPACE, characterizing the set of states that do not reach $R_i$ with probability $1$. By once again using the fact that NLOGSPACE = co-NLOGSPACE, we conclude that determining whether a state that is not in $R_i$ reaches $R_i$ with probability $1$ can be done in NLOGSPACE.

\end{proof}

\begin{manuallemma}{12}
    Given two states $q,q'$ in $G' \times D$ and an index $k$, the $k$-th bit of the probability $q$ transitions to $q'$ in $G \times D$ can be computed in NLOGSPACE.
\end{manuallemma}

\begin{proof}
    Note that $E'(q,q')$ is only a meaningful query if $q \not \in W_i$ and $q \not \in L_i$. If $q \in W_i$, then it is represented by the single target state in $G' \times D$ and has no outgoing transitions to consider. If $q \in L_i$, then it was deleted and does not belong to the state space of $G' \times D$. This leaves us with three relevant cases.

    \begin{enumerate}
        \item $q' \in W_i$. In this case, all outgoing transitions from $q$ to states in $W_i$ must be summed together. Since it is possible to calculate the original bits of $E$ in LOGSPACE, the extra sum can also be accomplished in LOGSPACE by iterating over a polynomial-sized set using a logarithmically-sized counter.

        \item $q' \in L_i$. In this case, the state $q'$ does not exist in $G' \times D$ and $E'(q,q') = 0$.

        \item $q' \not \in L_i$ and $q' \not \in W_i$. In this case, the transition probability is the same as in $G \times D$, and the result from Lemma~\ref{unprunedchainprobs} gives us the needed result. 
    \end{enumerate}

    While all cases can be decided in LOGSPACE, determining which case is applicable (whether a state belongs to $L_i$, $W_i$, or neither) requires NLOGSPACE. This gives us our result. 
\end{proof}

\section{Proof of the One-Step Deviation Theorem~\ref{existencedeviationsubgame}}\label{onestepsection}

This section complements Section~\ref{subgamedeviationsection}. In it, we provide the proof of Theorem~\ref{existencedeviationsubgame}.

\begin{manualtheorem}{18}[One-Step Deviation]
Given a controller advice $D$ and a game $D$, suppose that $D$ is not a subgame-perfect correlated equilibrium in $G$, i.e. there is some Player~$i$ and some relevant subgame $G_{q^*}$ such that Player~$i$ is incentivized to deviate in $G_{q^*}$.

Then, in the MDP $G_i \times D$ there exists a state $\langle q, a \rangle \in Q \cup (Q \times A_i)$ that satisfies the following:
\begin{enumerate}
    \item $q$ is relevant.
    \item $\langle q ,a \rangle \in Q \cup (Q \times A_i)$ is reachable from $q \in Q$ with positive probability.
    \item There exists an action $a^* \in A_i$ such that by choosing the action $a^*$ instead of the action $a$ at the state $\langle q,a \rangle$ and following the recommendations of $D$ everywhere else (playing the action $a'$ at a state $\langle q',a' \rangle$) in the MDP $G_i \times D$, the hitting probability of the set $G_i$ from $\langle q, a \rangle$ is strictly larger than the hitting probability of $G_i$ from $\langle q,a \rangle$ in $G \times D$.
\end{enumerate}

\end{manualtheorem}

\begin{proof}
    Assume that the controller advice $D$ is not a subgame-perfect correlated equilibrium in $G$. By definition, there must be some Player~$i$ that is incentivized to deviate from $D$ in some subgame $G|_h$. Note that this subgame must be relevant to Player~$i$, as otherwise Player~$i$ has already guaranteed a payoff of $1$ and is not incentivized to deviate in $G|_h$.

    We refer to $G|_h$ as $G_{q^*}$. Now, consider the MDP $G_i \times D$ with initial state $q^*$. If we consider the Player~$i$ strategy $\pi^D_i$ that follows the controller advice $D$'s recommendations (at a state $\langle q, a \rangle \in Q \times A_i$ the action $a$ is always chosen), it must the case that the strategy $\pi^D_i$ is suboptimal, as otherwise Player~$i$ would not be incentivized to deviate in $G_{q^*}$. Note that the strategy $\pi^D_i$ deterministically outputs the recommendation of $D$ at the controlled states $ Q\times A_i$, and, therefore, it is both deterministic and memoryless. Definitionally, this means that $\pi^D_i$ is a policy. Furthermore, since Player~$i$ has a profitable deviation from $D$ in $G_{q^*}$, it must be the case that the hitting probability induced by $\pi^D_i$ in $G_i \times D$ with initial state $q^*$ must be suboptimal.

    Because $\pi^D_i$ is a policy that achieves a suboptimal value in an MDP with a reachability goal, we can apply one step of the \emph{policy iteration} algorithm to it. This algorithm is well-known, but readers looking for more explicit detail are referred to~\cite{Puterman94}. Namely, the policy iteration algorithm implies that there must exist a controlled state $\langle q,a \rangle \in G_i \times D$ such that changing the action from $a$, the one recommended by $D$, to some other action $a^*$ strictly increases the hitting probability at $\langle q,a \rangle$. While at least one such witness is guaranteed to exist, there are two potential caveats to address.

    The first is if a state $\langle q,a \rangle$ was not reachable from $q$ with positive probability. In a sense, this means that Player~$i$ can profitably deviate from an action that $D$ does not recommend. Note that all potential witnesses cannot be of this form, as, if they were, then $D$ would necessarily never recommend a suboptimal action and therefore, the hitting probability induced by $\pi^D_i$ at $q^*$ would necessarily be optimal, contradicting our assumption.

    The second is if we have a witness $\langle q,a \rangle$ for an irrelevant $q$. Note again that it must be the case that some $q$ must be relevant, as if $\pi^D_i$ only suggests suboptimal actions at irrelevant states, then once again $\pi^D_i$ must induce an optimal hitting probability at $q^*$. 

    Intuitively, witnesses of the first or the second category do not influence the hitting probability at $q^*$, which we know must be suboptimal. Therefore, there must be a witness $\langle q,a \rangle$ in neither category, and this witness satisfies the three conditions of our theorem. 

    Note that changing the action at the witness $\langle q,a \rangle$ does not guarantee an improvement in the hitting probability from $q^*$. Our reference to the hitting probability at $q^*$ was to make the precise statement that if all witnesses correspond to actions not recommended by $D$ or irrelevant states, then the hitting probability at $q^*$ must be optimal. It may be the case that multiple changes are needed to realize an improvement at $q^*$ specifically.

    Finally, note that after the one-step deviation is applied,  since the hitting probability at $\langle q,a \rangle$ is improved and $\langle q,a \rangle$ is reachable from $q$ with positive probability, then the hitting probability at $q$ is necessarily improved.

\end{proof}

\section{An Example of a Controller Advice Represented by a Bayesian Network}\label{bayesianexamplesection}
Consider a controller advice for some state $q$ in a game with $2k+1$ players, each of which has two actions ($a$ or $b$), that recommends tuples of the following type: The first $k$ players are assigned actions uniformly, with a probability of $\frac{1}{2^k}$ given to each tuple in $A^k$.  The next $k$ players are assigned an exact copy of the first $k$ players' assignment. The final player is assigned the bitwise XOR of the first $k$ players' choices, treating $a$ as $0$ and $b$ as $1$.

There are $2^k$ elements in the support of this distribution, representing the $2^k$ assignments for the first $k$ players. Once this assignment is fixed, it determines the assignments to the other $k+1$ players. If, however, we represented this controller advice as a table, it would have $2^k$ rows, which is exponential in the size of the input. This distribution, however, admits a compact representation as a Bayesian Network, which we illustrate in Figure~\ref{BNfigure}, which also demonstrates how $\Theta$ functions.

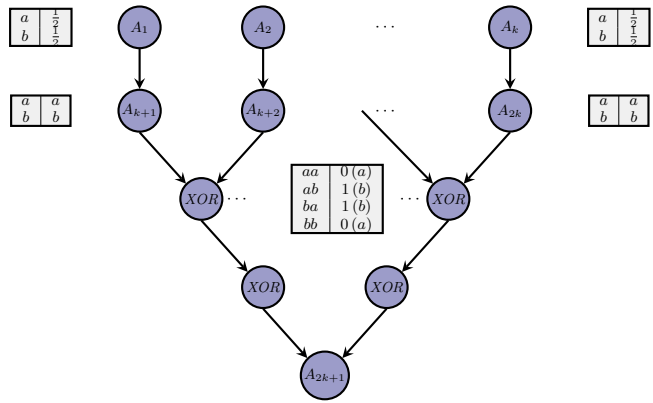
\begin{figure}[ht]
\centering
\begin{tikzpicture}[scale=0.65, every node/.style={scale=0.65},
  node distance=0cm,
  bn/.style={
    circle, draw, thick,
    minimum size=0.85cm,
    fill=blue!30!gray!60,
    text=black,
    font=\footnotesize\itshape,
    inner sep=1pt
  },
  tbl/.style={
    draw, thick, fill=gray!12,
    inner sep=0pt,
    font=\scriptsize
  },
  arr/.style={->, >=stealth, thick},
]


\node[bn] (A1) at (2.5,  0) {$A_1$};
\node[bn] (A2) at (5.0,  0) {$A_2$};
\node[bn] (Ak) at (10.0, 0) {$A_k$};

\node at (7.5, 0) {$\cdots$};

\node[tbl] at (0.5, 0) {%
  \begin{tabular}{c|c}
    $a$ & $\frac{1}{2}$ \\[1pt]
    $b$ & $\frac{1}{2}$ \\
  \end{tabular}
};

\node[tbl] at (12.2, 0) {%
  \begin{tabular}{c|c}
    $a$ & $\frac{1}{2}$ \\[1pt]
    $b$ & $\frac{1}{2}$ \\
  \end{tabular}
};


\node[bn] (Ak1) at (2.5,  -1.7) {$A_{k+1}$};
\node[bn] (Ak2) at (5.0,  -1.7) {$A_{k+2}$};
\node[bn] (A2k) at (10.0, -1.7) {$A_{2k}$};

\node at (7.5, -1.7) {$\cdots$};

\node[tbl] at (0.5, -1.7) {%
  \begin{tabular}{c|c}
    $a$ & $a$ \\[1pt]
    $b$ & $b$ \\
  \end{tabular}
};

\node[tbl] at (12.2, -1.7) {%
  \begin{tabular}{c|c}
    $a$ & $a$ \\[1pt]
    $b$ & $b$ \\
  \end{tabular}
};

\draw[arr] (A1.south)  -- (Ak1.north);
\draw[arr] (A2.south)  -- (Ak2.north);
\draw[arr] (Ak.south)  -- (A2k.north);


\node[bn] (XOR1) at (3.75, -3.5) {\textit{XOR}};
\node[bn] (XOR2) at (8.75, -3.5) {\textit{XOR}};
\node at (4.5, -3.5) {$\cdots$};
\node at (8, -3.5) {$\cdots$};

\draw[arr] (Ak1.south) -- (XOR1.north west);
\draw[arr] (Ak2.south) -- (XOR1.north east);

\draw[arr] (A2k.south)  -- (XOR2.north east);
\draw[arr] (7.0, -1.7)  -- (XOR2.north west); 

\node[tbl] at (6.5, -3.5) {%
  \begin{tabular}{c|c}
    $aa$ & $0\,(a)$ \\[1pt]
    $ab$ & $1\,(b)$ \\[1pt]
    $ba$ & $1\,(b)$ \\[1pt]
    $bb$ & $0\,(a)$ \\
  \end{tabular}
};

\node[bn] (XOR3) at (5.0, -5.3) {\textit{XOR}};
\node[bn] (XOR4) at (7.5, -5.3) {\textit{XOR}};

\draw[arr] (XOR1.south)      -- (XOR3.north west);
\draw[arr] (XOR2.south)      -- (XOR4.north east);


\node[bn] (A2k1) at (6.25, -7.1) {$A_{2k+1}$};

\draw[arr] (XOR3.south) -- (A2k1.north west);
\draw[arr] (XOR4.south) -- (A2k1.north east);

\end{tikzpicture}
\caption{The example controller advice represented as a Bayesian Network.
The first $k$ player actions $A_1, \ldots, A_k$ are chosen uniformly and
independently (prior tables on the left and right).
These are copied to players $A_{k+1}, \ldots, A_{2k}$ via identity
conditional tables (second row).
The final player $A_{2k+1}$ is assigned the bitwise XOR of the first $k$
players' choices, computed in a standard tree-like fashion via intermediate
XOR vertices (rows 3 and 4). Due to space constraints, the XOR nodes in row 4 only have one incoming arrow; in reality, they would have two. Here $a = 0$ and $b = 1$ in the XOR computation.}
\label{BNfigure}
\end{figure}

In the figure, we can see that the first $k$ player actions are chosen uniformly and independently, given by a table that details the \emph{prior} probabilities of $\frac{1}{2}$ for action $a$ and $\frac{1}{2}$ for action $b$. These actions are then carried over to Players $k+1$ to $2k$, demonstrated by the \emph{conditional} probability tables in the second row. The conditional tables state that if the action for Player~$i$ (for $i \leq k$) is sampled to be $a$, then this corresponds to an assignment of $a$ for Player $k+i$. It could also be the case that these tables also assign probabilistic distributions; for example, an assignment of $a$ for Player~$i$ corresponds to a distribution of $\frac{1}{2} a$ and $ \frac{1}{2} b$ for player $k + i$.   

The final Player, $2k+1$, was introduced to show how intermediate variables can be used in a BN to represent general calculations. Here, we compute the XOR of the first $k$ players by computing the pairwise XOR of the Players $k+1$ to $2k$ (who were necessarily assigned the same actions as the first $k$ players), and we compute this XOR in the standard tree-like fashion.  To make reading easier, we reiterate that $a=0$ and $b=1$ in this XOR computation in the table itself.  The key point is that while the table representation of this distribution is exponential in the number of players, the BN representation is polynomial in the number of players. There is a linear number of vertices with respect to the number of players (including the extra vertices required to compute the XOR), each of which is associated with a table that is at most 4 rows by 2 columns -- a fixed upper bound size.

\section{P-hardness of the Correlated Equilibrium Verification Problem}\label{Preductionsection}

This section complements Section~\ref{correlatedcomplexitysection} in the main text. In this section, we present the omitted lower bound proof showing that the correlated equilibrium verification problem is P-hard via a reduction from the Circuit Value Problem (CVP). 

\begin{problem*}[Circuit Value Problem]
The circuit value problem takes as input a \emph{circuit} with inputs, which is represented as a finite sequence of triples $C = \{ \langle a_i,b_i,c_i \rangle |  i = 1 \ldots k \}$ where $a$ is an element of the set $\{{\sf true}, {\sf false}, {\sf and}, {\sf or}\}$ and $b$ and $c$ are \emph{inputs}. Intuitively, each triple represents either a gate or an input. When $a \in \{{\sf true}, {\sf false}\}$, the triple represents an input of either true or false, and the values of $b$ and $c$ are irrelevant. When $a \in \{{\sf and}, {\sf or}\}$ then the triple represents either a logical gate with two inputs, and $b$ and $c$ represent the indices of the triples that correspond to these inputs. The circuit value problem is then to decide whether the circuit outputs $1$ for the given inputs.
    
\end{problem*}

In~\cite{PapadimitrioucomplexityofMDP}, this problem is used for P-hard reductions for a suite of problems regarding computing optimal values in MDPs (e.g., their Theorem 1). In this paper, we consider the verification problem, which involves a different setting. Therefore, we now explain how to modify the proof of Theorem 1 in~\cite{PapadimitrioucomplexityofMDP} to fit the verification setting. Before doing so, it is useful to briefly recall the proof ideas.

In order to reduce the CVP problem to computing an optimal value in an MDP, the reduction in~\cite{PapadimitrioucomplexityofMDP} followed four broad ideas.

\begin{enumerate}
    \item Two uncontrolled states labeled ${\sf true}$ and ${\sf false}$ are constructed in the MDP, which represent true and false \emph{values}. Each gate and input in the circuit was then represented by a state in the MDP. The reachability goal in the MDP is ${\sf true}$.
    \item  ${\sf or }$ gates in the circuit were mapped to controlled states in the MDP where the agent could deterministically choose between the two inputs, with the idea that at least one of the two inputs should eventually demonstrate a path to ${\sf true}$.
    \item ${\sf and}$ gates were mapped to uncontrolled states in the MDP that probabilistically transitioned to each of their input states with probability $\frac{1}{2}$, with the idea that in order for ${\sf true}$ to be reached with probability $1$ paths that reach ${\sf true}$ must be demonstrated from both input states.
    \item Each input variable deterministically maps to the state true if it was assigned the value true and false otherwise. The idea is that once the and/or gates were decomposed, they formed a distribution over the input variable states in the MDP. If a variable was assigned true, then the state representing the variable deterministically transitions to ${\sf true}$, which is the reachability goal. Otherwise, the state representing the variable deterministically transitions to ${\sf false}$. 
\end{enumerate}

It was then proven in~\cite{PapadimitrioucomplexityofMDP} that the input circuit had a value of $1$ iff the optimal value of the constructed MDP was $1$, which corresponded to forcing a final distribution that only contained variables that were assigned to true in the input. The start state of the MDP corresponded to the gate of the primary connective of the circuit.

To reduce this problem to the correlated equilibrium verification problem, we extend the MDP constructed above to a one-player game $G$ that contains one extra uncontrolled state ${\sf fp}$ (for `fixed payoff'). At the start state of the game, the player can either choose to go to ${\sf fp}$ or pass it up, at which point the MDP and the game behave identically. The state ${\sf fp}$ transitions to ${\sf true}$ with probability $ 1 - \frac{1}{2^i} + \frac{1}{2^{i+1}}$ (where $i$ is the number of gates in $C$) and ${\sf false}$ otherwise. The controller advice $D$ recommends the player to go to ${\sf fp}$, and its recommendations at other states are irrelevant and can be filled in arbitrarily.

\begin{lemma}
    The input circuit has a value of $1$ iff $D$ is not a correlated equilibrium in $G$.
\end{lemma}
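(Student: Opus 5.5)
The plan is to compare two quantities: the payoff the player gets by following $D$, and the optimal payoff available in the MDP part of $G$. Since $G$ has a single player, a deviation from $D$ is just a strategy in the one-player game. By the remark at the end of Section~\ref{constructionssections}, it suffices to consider deterministic memoryless strategies. Following $D$ means moving to ${\sf fp}$ at the start state, which yields exactly $\rho(D) = 1 - \frac{1}{2^i} + \frac{1}{2^{i+1}} = 1 - \frac{1}{2^{i+1}}$. Any deviation that still enters ${\sf fp}$ gets the same value. So $D$ fails to be a correlated equilibrium iff the optimal reachability value $v^*$ of the embedded MDP, started at the output-gate state, strictly exceeds $1 - 2^{-(i+1)}$.

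The key step is a gap property of the Papadimitriou--Tsitsiklis MDP. If the circuit evaluates to $1$, then $v^* = 1$; if it evaluates to $0$, then $v^* \le 1 - 2^{-i}$. The first direction is the forward half of their Theorem~1, and I would take it as given. For the second direction, I would fix an arbitrary deterministic memoryless policy and prove by induction on gate depth that a gate $g$ whose circuit value is ${\sf false}$ reaches ${\sf true}$ with probability at most $1 - 2^{-d(g)}$, where $d(g) \le i$ is the depth of $g$. The cases are as follows.
\begin{itemize}
\item A false input goes to ${\sf false}$ with probability $1$, so its value is $0$.
\item A false ${\sf or}$ gate has both children false, and whichever child the policy selects gives the bound by induction.
\item A false ${\sf and}$ gate has at least one false child $c$, so its value is at most $\frac12 \cdot 1 + \frac12(1 - 2^{-d(c)}) = 1 - 2^{-(d(c)+1)}$.
\end{itemize}
The circuit is acyclic, so the MDP is acyclic apart from the absorbing sinks, and this induction is well founded. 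If needed, I would argue that depth-indexed layering keeps the depth bounded by the number of gates $i$.

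Combining the two parts finishes the proof. If the circuit has value $1$, the deviating strategy that bypasses ${\sf fp}$ and plays the optimal MDP policy gets $1 > 1 - 2^{-(i+1)}$, so $D$ is not a correlated equilibrium. If the circuit has value $0$, every strategy gets at most $\max\bigl(1 - 2^{-(i+1)}, 1 - 2^{-i}\bigr) = 1 - 2^{-(i+1)} = \rho(D)$, so $D$ is a correlated equilibrium. I would also note that the construction is computable in logspace, since the threshold has $O(i)$ bits.

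The main obstacle is the quantitative gap in the value-$0$ case. Papadimitriou--Tsitsiklis only establish that the value is $1$ iff the circuit is true. Here I need an explicit bound of $1 - 2^{-i}$ that is tight enough to lie below the ${\sf fp}$ threshold, so the depth induction above has to be carried out carefully, including the handling of shared subgates and the exact depth bound.
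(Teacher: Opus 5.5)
Your proposal is correct and follows the paper's proof: both compare the ${\sf fp}$ payoff $1-2^{-(i+1)}$ with the optimal value of the embedded Papadimitriou--Tsitsiklis MDP, using the gap between value $1$ when the circuit is true and value at most $1-2^{-i}$ when it is false. Your induction on gate height is a rigorous version of the paper's informal ``worst case is a chain of ${\sf and}$ gates with a single failing branch'' justification of that gap. Your remark that the reduction is logspace-computable is also the right requirement for P-hardness, where the paper only says polynomial time.
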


\begin{proof}
    First, let us consider what happens when the input circuit $C$ does not evaluate to $1$. In the MDP, the maximal probability with which ${\sf true}$ can be reached is then at most $1 - \frac{1}{2^i}$, where $i$ is the number of gates in $C$. This can be seen by assuming that every gate is an ${\sf and}$ gate and that only one branch leads to non-acceptance. This branch occurs with probability $\frac{1}{2^d}$, where $d$ is the depth of $C$, which is underapproximated by $\frac{1}{2^i}$. Therefore, if $C$ does not evaluate to $1$, the optimal value in the MDP must be $\leq 1 - \frac{1}{2^i}$.

    In the game $G$, $D$ recommends the player to visit ${\sf fp}$. It is clear that the expected payoff of this action is $ 1 - \frac{1}{2^i} + \frac{1}{2^{i+1}}$, which is strictly greater than $ 1 - \frac{1}{2^i}$. If this value is not optimal, then the optimal action is to ignore ${\sf fp}$ and simulate the circuit. The only way for the circuit simulation to yield a better value than $ 1 - \frac{1}{2^i} + \frac{1}{2^{i+1}}$ is if the optimal value in the MDP is $1$, which is equivalent to the circuit $C$ evaluating to $1$. Therefore, in $G$, the player is incentivized to deviate from $D$ iff the circuit evaluates to $1$.

    We conclude by noting that the size of the number $ 1 - \frac{1}{2^i} + \frac{1}{2^{i+1}}$ is linear in $i$ when expressed as a binary numeral, meaning that this construction can be done in polynomial time. 
    
\end{proof}

This reduction then gives us our hardness result.

\begin{manualtheorem}{21}
    The correlated equilibrium verification problem is P-complete.
\end{manualtheorem}

\end{document}